\providecommand{\U}[1]{\protect\rule{.1in}{.1in}}
\newtheorem{theorem}{Theorem}
\newtheorem{proposition}[theorem]{Proposition}
\newenvironment{proof}[1][Proof]{\noindent\textbf{#1.} }{\ \rule{0.5em}{0.5em}}
\newcommand{\bit}{\begin{itemize}}
\newcommand{\eit}{\end{itemize}\par\noindent}
\newcommand{\ben}{\begin{enumerate}}
\newcommand{\een}{\end{enumerate}\par\noindent}
\newcommand{\beq}{\begin{equation}}
\newcommand{\eeq}{\end{equation}\par\noindent}
\newcommand{\beqa}{\begin{eqnarray}}
\newcommand{\eeqa}{\end{eqnarray}\par\noindent}
\newcommand{\beqn}{\begin{eqnarray}}
\newcommand{\eeqn}{\end{eqnarray}\par\noindent}
\newcommand{\idn}{\mathbb{I}}
\def\hlinewd#1{%
  \noalign{\ifnum0=`}\fi\hrule \@height #1 \futurelet
   \reserved@a\@xhline}
\newcommand{\cit}[1]{~\cite{#1}}
\newcommand{\suppref}[1]{Appendix~\ref{supp#1}}
\begin{document}

\title{An experimental test of noncontextuality without unwarranted idealizations}

\author{Michael D. Mazurek}
\affiliation{Institute for Quantum Computing, University of Waterloo, Waterloo, Ontario N2L 3G1, Canada}
\affiliation{Department of Physics \& Astronomy, University of Waterloo, Waterloo, Ontario N2L 3G1, Canada}
\author{Matthew F. Pusey}
\affiliation{Perimeter Institute for Theoretical Physics, 31 Caroline Street North, Waterloo, Ontario N2L 2Y5, Canada}
\author{Ravi Kunjwal}
\affiliation{Optics \& Quantum Information Group, The Institute of Mathematical Sciences, C.I.T Campus, Taramani, Chennai 600 113, India}
\author{Kevin J. Resch}
\affiliation{Institute for Quantum Computing, University of Waterloo, Waterloo, Ontario N2L 3G1, Canada}
\affiliation{Department of Physics \& Astronomy, University of Waterloo, Waterloo, Ontario N2L 3G1, Canada}
\author{Robert W. Spekkens}
\affiliation{Perimeter Institute for Theoretical Physics, 31 Caroline Street North, Waterloo, Ontario N2L 2Y5, Canada}

\begin{abstract} %

To make precise the sense in which nature fails to respect classical physics, one requires a formal notion of classicality.  Ideally, such a notion should be defined operationally, so that it can be subjected to a direct experimental test, and it should be applicable in a wide variety of experimental scenarios, so that it can cover the breadth of phenomena that are thought to defy classical understanding.
Bell's notion of local causality fulfills the first criterion but not the second.  The notion of noncontextuality fulfills the second criterion, but it is a long-standing question whether it can be made to fulfill the first. Previous attempts to experimentally test noncontextuality have all presumed certain idealizations that do not hold in real experiments, namely, noiseless measurements and exact operational equivalences. We here show how to devise tests that are free of these idealizations.  We also perform a photonic implementation of one such test that rules out noncontextual models with high confidence.
\end{abstract}

\maketitle
\section{Introduction}
Making precise the manner in which a quantum world differs from a classical one 
is a surprisingly difficult task.  
The most successful attempt, due to Bell\cit{bell64}, shows a conflict between quantum theory and
a feature of classical theories termed {\em local causality}, which asserts 
that no causal influences propagate faster than light. But the latter assumption can only be tested for scenarios wherein there are two or more systems that are space-like separated.
And yet few believe 
that this highly specialized situation is the only point where the quantum departs from the classical.
A leading candidate for a notion of nonclassicality with a broader scope
 is the failure of quantum theory to admit of a noncontextual model, as proven by Kochen and Specker\cit{kochen68}.  
Recent work has highlighted how this notion
 lies at the heart of many phenomena that are taken to be distinctly quantum: the fact that quasi-probability representations go negative\cit{spekkens2008negativity,ferrie2008frame}, the existence of quantum advantages for cryptography\cit{spekkens09} and for computation\cit{Howard2014,raussendorf2013,hoban2011}, and the possibility of anomalous weak values\cit{Pusey2014}. 

An experimental refutation of noncontextuality 
would demonstrate that the conflict with noncontextual models is not only a feature of quantum theory, but of nature itself, and hence also of any successor to quantum theory.
The requirements for such an experimental test, however, 
have been a subject of much controversy\cit{meyer99finite,kent99noncontextual,cliftonkent, mermin1999kochen, simon01hidden, Larsson2002, barrett04}.  

A fundamental problem with most proposals for testing noncontextuality\cit{cabello98proposed,cabello00kochen,simon00feasible,cabello08proposed,cabello08experimentally,
badziag09universality,Guhne10compatibility},
 and experiments performed to date\cit{michler00,huang03,hasegawa06,Gadway2009,bartosik09,kirchmair09,amselem09,moussa10,lapkiewicz11},
 is that they assume that measurements have a deterministic response in the noncontextual model.  It has been shown that this can only be justified under the idealization that measurements are noiseless\cit{spekkens14}, which is never satisfied precisely by any real experiment.  
We here show how to contend with such noise. 

Another critical problem with previous proposals is the fact that the assumption of noncontextuality can only be brought to bear when two measurement events (an event is a measurement and an outcome) are {\em operationally equivalent}, which occurs when the two events 
are assigned exactly the same probability by all preparation procedures\cit{spekkens05}; in this case they are said to differ only by the measurement {\em context}.  In a real experiment, however, one never achieves the ideal of precise operational equivalence. 
 Previous work on testing noncontextuality---including the only experiment to have circumvented the problem of noisy measurements (by focusing on preparations)\cit{spekkens09}---has failed to provide a satisfactory account of how   the deviation from strict operational equivalence  should be accounted for in the interpretation of the results. 
 We here demonstrate a general technique that allows one to circumvent this problem.
 
For Bell's notion of local causality, the theoretical work of Clauser \emph{et.\ al.}\cit{clauser69} was critical to enabling an experimental test without unwarranted idealizations, such as the perfect correlations presumed in Bell's original proof\cit{bell64}. Similarly, the theoretical innovations we introduce here make it possible for the first time to subject noncontextuality to an
experimental test without the idealizations described above. 
We report on a quantum-optical experiment of this kind, the results of which rule out noncontextual models with high confidence.

\section{A noncontexuality inequality}
According to the operational approach proposed in ref.~\onlinecite{spekkens05}, to assume noncontextuality is to assume a constraint on model-construction, namely, that {\em if procedures are statistically equivalent at the operational level then they ought to be statistically equivalent in the underlying model}.
Operationally, a system is associated with a set $\mathcal{M}$ (resp.\ $\mathcal{P}$) of physically possible measurement (resp. preparation) procedures.  An {\em operational theory} specifies the possibilities for the conditional probabilities $\{ p(X|P,M): P\in \mathcal{P},M\in\mathcal{M}\}$ where $X$ ranges over the outcomes of measurement $M$.  In an {\em ontological model} of such a theory,  the causal influence of the preparation on the measurement outcome is mediated by the {\em ontic state} of the system, that is, a full specification of the system's physical properties.
We denote the space of ontic states by $\Lambda$.
It is presumed that when the preparation $P$ is implemented, the ontic state of the system, $\lambda\in \Lambda$, is sampled from a probability distribution $\mu(\lambda|P)$, and when the system is subjected to the measurement $M$, the outcome $X$ is distributed as $\xi(X|M,\lambda)$. Finally, for the model to reproduce the experimental statistics, we require that
\begin{equation}
\sum_{\lambda\in \Lambda}\xi(X|M,\lambda)\mu(\lambda|P) = p(X|M,P).
\label{empirical}
\end{equation}

A general discussion of the assumption of noncontextuality is provided in \suppref{A}, but one can understand the concept through the concrete example we consider here (based on a construction from Sec.~V of ref.~\onlinecite{spekkens05}).

Suppose there is a measurement procedure, $M_*$, that is operationally indistinguishable from a fair coin flip: it always gives a uniformly random outcome regardless of the preparation procedure,
\beq
p(X=0,1|M_*,P)=\frac{1}{2},\; \forall P\in\mathcal{P}.
\label{opequiv1}
\eeq
In this case, noncontextuality dictates that in the underlying model, the measurement should also give a uniformly random outcome regardless of the ontic state of the system,
\beq
\xi(X=0,1|M_*,\lambda)=\frac{1}{2},\; \forall \lambda \in\Lambda.
\label{NCimpl1}
\eeq
In other words, because $M_*$ appears operationally to be just like a coin flip, noncontextuality dictates that physically it must be just like a coin flip.

The second application of noncontextuality is essentially a time-reversed version of the first.  Suppose there is  a triple of preparation procedures, $P_1$, $P_2$ and $P_3$, that are operationally indistinguishable from one another: no measurement reveals any information about which of these preparations was implemented,
\beq
\forall M \in \mathcal{M}: p(X|M,P_1)=p(X|M,P_2)=p(X|M,P_3).
\label{opequiv2}
\eeq
In this case, noncontextuality dictates that in the underlying model, the ontic state of the system does not contain any information about which of these preparation procedures was implemented,
\beq
\forall \lambda \in \Lambda: \mu(\lambda|P_1)=\mu(\lambda|P_2)=\mu(\lambda|P_3).
\label{NCimpl2}
\eeq
In other words, because it is impossible, operationally, to extract such information, noncontextuality dictates that physically, the information is not present in the system.

Suppose that $M_*$ can be realized as a uniform mixture of three other binary-outcome measurements, denoted $M_{1}$, $M_{2}$ and $M_{3}$.  That is, one implements $M_*$ by uniformly sampling $t\in\{1,2,3\}$, implementing $M_t$, then outputting its outcome as the outcome of $M_*$.
Finally, suppose that each preparation $P_t$ can be realized as the equal mixture of two other preparation procedures, denoted $P_{t,0}$ and $P_{t,1}$.
Consider implementing $M_t$ on $P_{t,b}$, and consider the average \emph{degree of correlation} between the measurement outcome $X$ and the preparation variable $b$:
\beq
A\equiv \frac{1}{6} \sum_{t\in \{ 1,2,3\}} \sum_{b\in\{0,1\}} p(X=b|M_{t},P_{t,b}).
\label{defnA}
\eeq
We now show that noncontextuality implies a nontrivial bound on $A$.

The proof is by contradiction.  In order to have perfect correlation on average, we require perfect correlation in each term,  which implies that for all ontic states $\lambda$  assigned nonzero probability by $P_{t,b}$, the measurement $M_t$ must respond deterministically with the $X=b$ outcome.
 Given that $P_t$ is an equal mixture of $P_{t,0}$ and $P_{t,1}$, it follows that for all ontic states $\lambda$ assigned nonzero probability by $P_t$, the measurement $M_t$ must have a deterministic response.

But Eq.~\eqref{NCimpl2} (which follows from the assumption of noncontextuality) asserts that the preparations $P_1$, $P_2$ and $P_3$ must assign nonzero probability to precisely the {\em same} set of ontic states.
 Therefore, to achieve perfect correlation on average, each measurement must respond deterministically to {\em all} the ontic states in this set.

Now note that by the definition of $M_*$, the probability of its outcome $X=b$ is $\xi(X=b|M_*,\lambda) = \frac{1}{3}\sum_{t\in \{1,2,3\}} \xi(X=b|M_t,\lambda)$.  But then Eq.~\eqref{NCimpl1} (which follows from the assumption of noncontextuality) says
\beq
\frac{1}{3}\sum_{t\in \{1,2,3\}} \xi(X=b|M_t,\lambda) = \frac{1}{2}.
\label{constraint}
\eeq

For each deterministic assignment of values, $(\xi(X=b|M_1,\lambda),\xi(X=b|M_2,\lambda),\xi(X=b|M_3,\lambda))\in \{ (0,0,0),(0,0,1), \dots, (1,1,1)\}$, the constraint of Eq.~\eqref{constraint} is violated. It follows, therefore, that for a given $\lambda$, one of $M_1$, $M_2$ or $M_3$ must fail to have a deterministic response, contradicting the requirement for perfect correlation on average. This concludes the proof.

The precise (i.e. tight) bound is
\beq
A \le \frac{5}{6},
\label{maininequality}
\eeq
as we demonstrate in \suppref{B}. This is our noncontextuality inequality.

\section{Quantum violation of the inequality}
Quantum theory predicts there is a set of preparations and measurements on a qubit 
having the supposed properties and achieving $A=1$, the logical maximum.  Take the $M_t$ to be 
represented by the observables $\vec{\sigma} \cdot \hat{n}_t$ where $\vec{\sigma}$ is the vector of Pauli operators and the unit vectors $\{ \hat{n}_1,\hat{n}_2, \hat{n}_3 \}$ are separated by 120$^{\circ}$ in the $\hat{x}-\hat{z}$ plane of the Bloch sphere of qubit states\cit{nielsenchuang}. The $P_{t,b}$ are  the eigenstates of these observables, where we associate the positive eigenstate $ |\text{+}\hat{n}_t\rangle \langle +\hat{n}_t|$ with $b=0$.
To see that the statistical equivalence of Eq.~\eqref{opequiv1} is satisfied, it suffices to note that
\beq
 \frac{1}{3} |\text{+}\hat{n}_1\rangle \langle\text{+}\hat{n}_1 |+\frac{1}{3} |\text{+}\hat{n}_2\rangle \langle \text{+}\hat{n}_2 |+\frac{1}{3} |\text{+}\hat{n}_3\rangle \langle\text{+}\hat{n}_3 | =\frac{1}{2}\idn,
 \label{qopequiv1}
\eeq
and to recall that for any density operator $\rho$, ${\rm tr}(\rho \frac{1}{2}\idn)=\frac{1}{2}$.
To see that the statistical equivalence of Eq.~\eqref{opequiv2} is satisfied, it suffices to note that for all pairs $t,t'\in\{1,2,3\}$,
\beqa
&  \frac{1}{2} |\text{+}\hat{n}_t\rangle \langle\text{+}\hat{n}_t | +\frac{1}{2} |{-} \hat{n}_t\rangle \langle{-}\hat{n}_t |\nonumber\\
&=\frac{1}{2} |\text{+}\hat{n}_{t'}\rangle \langle\text{+}\hat{n}_{t'} | +\frac{1}{2} |{-}\hat{n}_{t'}\rangle \langle{-}\hat{n}_{t'} | \label{qopequiv2},
\eeqa
which asserts that the average density operator for each value of $t$ is the same, and therefore leads to precisely the same statistics for all measurements.
Finally, it is clear that the outcome of the measurement of $\vec{\sigma} \cdot \hat{n}_t$ is necessarily perfectly correlated with whether the state was $ |\text{+}\hat{n}_t\rangle \langle \text{+}\hat{n}_t |$ or  $|{-}\hat{n}_t\rangle \langle {-}\hat{n}_t |$, so that $A=1$.

These quantum measurements and preparations are what we seek to implement experimentally, so we refer to them as {\em ideal}, and denote them by $M^{\rm i}_t$ and $P^{\rm i}_{t,b}$.

Note that our noncontextuality inequality can accommodate noise in both the measurements and the preparations, up to the point where the average of 
$p(X=b|M_t,P_{t,b})$ drops below $\frac{5}{6}$.  It is in this sense that our inequality does not presume the idealization of noiseless measurements.

\section{Contending with the lack of exact operational equivalence}
The actual preparations and measurements in the experiment, which we call the {\em primary} procedures and denote by $P^{\rm p}_{1,0}$, $P^{\rm p}_{1,1}$, $P^{\rm p}_{2,0}$, $P^{\rm p}_{2,1}$, $P^{\rm p}_{3,0}$, $P^{\rm p}_{3,1}$ and $M^{\rm p}_1$, $M^{\rm p}_2$, $M^{\rm p}_3$, necessarily deviate from the ideal versions and consequently their mixtures, that is, $P^{\rm p}_1$, $P^{\rm p}_2$, $P^{\rm p}_3$ and $M^{\rm p}_*$, fail to achieve strict equality in Eqs.~\eqref{opequiv1} and \eqref{opequiv2}.

We solve this problem as follows.
From the outcome probabilities on the six primary preparations,
one can infer the outcome probabilities on the entire family of probabilistic mixtures of these.
It is possible to find within this family many sets of six preparations, $P^{\rm s}_{1,0}$, $P^{\rm s}_{1,1}$, $P^{\rm s}_{2,0}$, $P^{\rm s}_{2,1}$, $P^{\rm s}_{3,0}$, $P^{\rm s}_{3,1}$, which define mixed preparations $P^{\rm s}_1$, $P^{\rm s}_2$, $P^{\rm s}_3$ that satisfy the operational equivalences of Eq.~\eqref{opequiv2}  \emph{exactly}.
We call the $P^{\rm s}_{t,b}$ {\em secondary} preparations.
We can define secondary measurements $M^{\rm s}_1$, $M^{\rm s}_2$, $M^{\rm s}_3$ and their uniform mixture $M^{\rm s}_{*}$ in a similar fashion.
 The essence of our approach, then, is to identify such secondary sets of procedures and use {\em these} to calculate $A$.  If quantum theory is correct, then we expect to get a value of $A$ close to 1 if and only if we can find suitable secondary procedures that are close to the ideal versions.

To test the hypothesis of noncontextuality, one must allow for the possibility that the experimental procedures {\em do not} admit of  a quantum model.  Nonetheless, for pedagogical purposes, we will first provide the details of how one would construct the secondary sets under the assumption that all the experimental procedures do admit of a quantum model.

\begin{figure}
  \centering
  \includegraphics[width=0.48\textwidth]{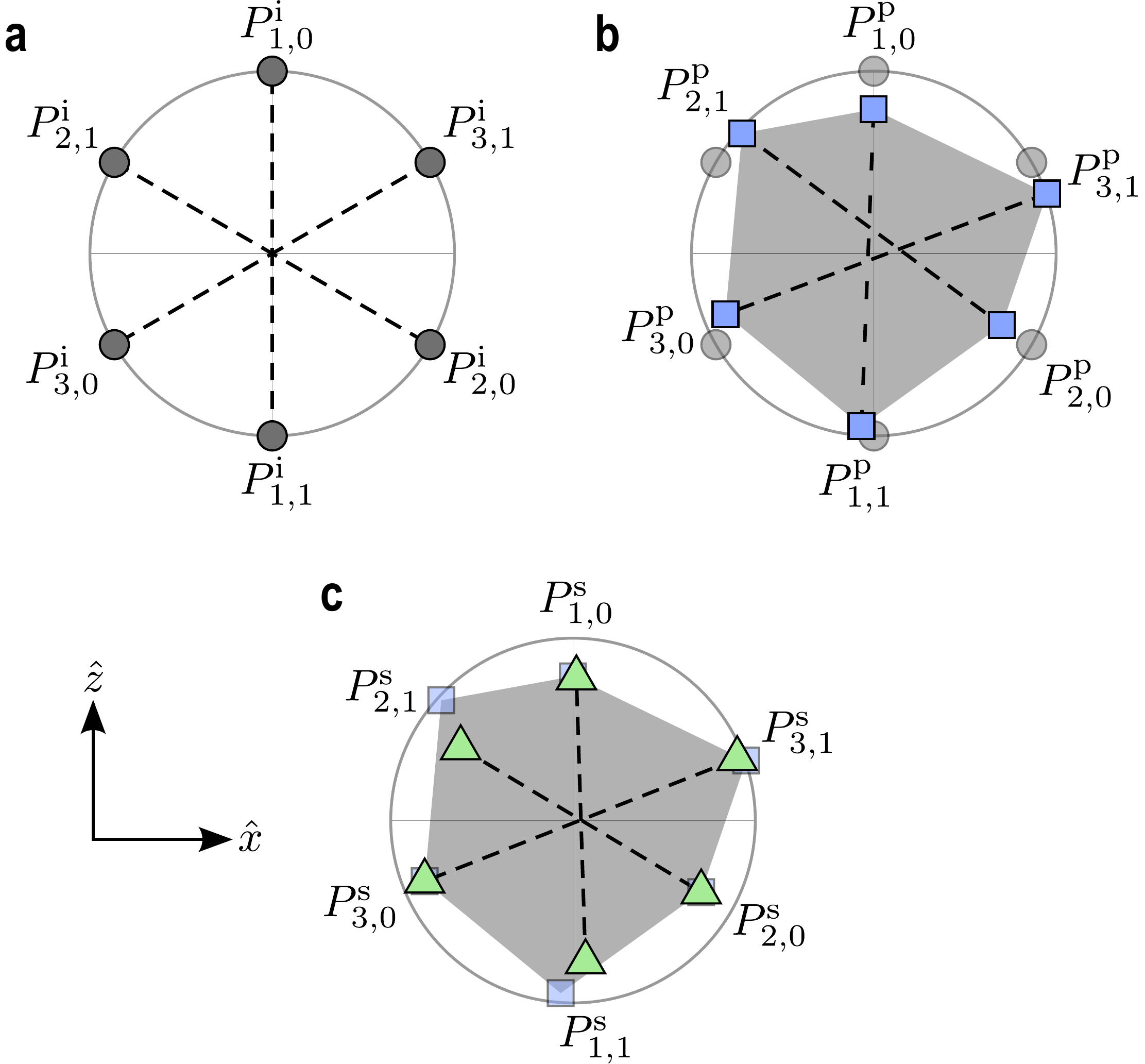}
  \caption{Illustration of our solution to the problem of the failure to achieve strict operational equivalences of preparations (under the simplifying assumption that these are confined to the $\hat{x}-\hat{z}$ plane of the Bloch sphere).  For a given pair, $P_{t,0}$ and $P_{t,1}$, the midpoint along the line connecting the corresponding points represents their equal mixture, $P_t$.  \textbf{a}, The target preparations $P^{\rm i}_{t,b}$, with the coincidence of the midpoints of the three lines illustrating that they satisfy the operational equivalence \eqref{opequiv2} exactly. \textbf{b}, Illustration of how errors in the experiment (exaggerated in magnitude) will imply that the realized preparations $P^{\rm p}_{t,b}$ (termed primary) will deviate from the ideal. The lines indicate that not only do these preparations fail to satify the operational equivalence \eqref{opequiv2}, but since the lines do not meet, no mixtures of the $P^{\rm p}_{t,0}$ and $P^{\rm p}_{t,1}$ can be found at a single point independent of $t$.  The set of preparations corresponding to probabilistic mixtures of the $P^{\rm p}_{t,b}$ are depicted by the grey region.
  \textbf{c}, Secondary preparations $P^{\rm s}_{t,b}$ have been chosen from this grey region, with the coincidence of the midpoints of the three lines indicating that the operational equivalence \eqref{opequiv2} has been restored.  Note that we require only that the mixtures of the three pairs of preparations be the same, not that they correspond to the completely mixed state.}
  \label{fg:convex_combs}
\end{figure}

In Fig.~1, we describe the construction of secondary preparations in a simplified example of six density operators that deviate from the ideal states only {\em within} the $\hat{x}-\hat{z}$ plane of the Bloch sphere.

In practice, the six density operators realized in the experiment will not quite lie in a plane.
We use the same idea to contend with this, but with one refinement:
we supplement our set of ideal preparations with two additional ones, denoted $P^{\rm i}_{4,0}$ and $P^{\rm i}_{4,1}$  corresponding to
the two eigenstates of $\vec{\sigma}\cdot \hat{y}$.  The two procedures that are actually realized in the experiment are denoted $P^{\rm p}_{4,0}$ and $P^{\rm p}_{4,1}$ and are considered supplements to the primary set.  We then search for our six secondary preparations among the probabilistic mixtures of this supplemented set of primaries rather than among the probabilistic mixtures of the original set.  Without this refinement, it can happen that one cannot find six secondary preparations that are close to the ideal versions, as we explain in \suppref{C}.

The scheme for defining secondary measurement procedures is also described in \suppref{C}.  Analogously to the case of preparations, one contends with deviations from the plane by supplementing the ideal set with the observable $\vec{\sigma}\cdot \hat{y}$.

Note that in order to identify which density operators have been realized in an experiment,
 the set of measurements must be complete for state tomography\cit{james01}.  Similarly, to identify which sets of effects
 have been realized,
 the set of preparations must be complete for measurement tomography\cit{lundeen09}.  However, the original ideal sets fail to be tomographically complete because they are restricted to a plane of the Bloch sphere, and an effective way to complete them is to add the observable $\vec{\sigma}\cdot \hat{y}$ to the measurements and its eigenstates to the preparations.
Therefore, even if we did not already need to supplement these ideal sets for the purpose of providing greater leeway in the construction of the secondary procedures, we would be forced to do so in order to ensure that one can achieve tomography.

The relevant procedure here is not quite state tomography in the usual sense, since we want to allow for systematic errors in the measurements as well as the preparations. Hence the task\cit{stark14,stark14b} is to find a set of qubit density operators, $\rho_{t,b}$, and POVMs, $\{E_{X|t}\}$, that together make the measured data as likely as possible (we cannot expect ${\rm tr}(\rho_{t,b}E_{X|t})$ to match the measured relative frequencies exactly due to the finite number of experimental runs).

To analyze our data in a manner that does not prejudice which model---noncontextual, quantum, or otherwise---does justice to it, we must search for representations of the preparations and measurements not amongst density operators and sets of effects, but rather their more abstract counterparts in the formalism of generalised probabilistic theories\cit{hardy01,barrett07}, called generalised states and effects. The assumption that the system is a qubit is replaced by the strictly weaker assumption that three two-outcome measurements are tomographically complete. (In generalised probabilistic theories, a set of measurements are called tomographically complete if their statistics suffice to determine the state.)
We take these states and effects as estimates of our primary preparations and measurements, and we define our estimate of the secondary procedures in terms of these, which in turn are used to calculate our estimate for $A$.
We explain how the raw data is fit to a set of generalised states and effects in \suppref{D}.  We characterize the quality of this fit with a $\chi^2$ test.

\begin{figure}
  \centering
  \includegraphics[width=0.48\textwidth]{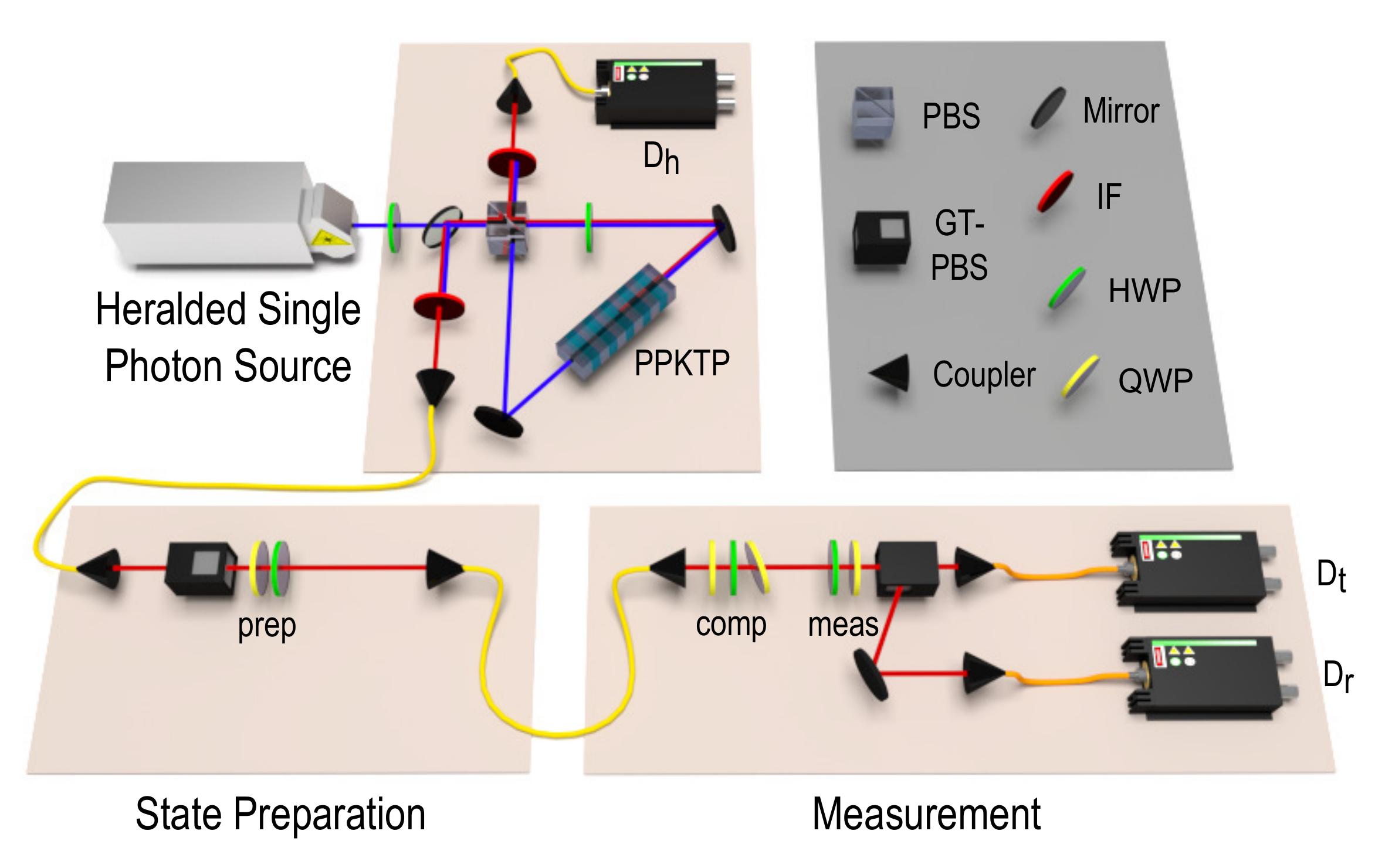}
  \caption{The experimental setup. Polarization-separable photon pairs are created via parametric downconversion, and detection of a photon at $D_h$ heralds the presence of a single photon. The polarization state of this photon is prepared with a polarizer and two waveplates (prep). A single-mode fibre is a spatial filter that decouples beam deflections caused by the state-preparation \text{and measurement} waveplates from the coupling efficiency into the detectors. Three waveplates (comp) are set to undo the polarization rotation caused by the fibre. Two waveplates (meas), a polarizing beamsplitter, and detectors $D_r$ and $D_t$ perform a two-outcome measurement on the state. PPKTP, periodically poled potassium titanyl phosphate; PBS, polarizing beamsplitter; GT-PBS, Glan-Taylor polarizing beamsplitter; IF, interference filter; HWP, half-waveplate; QWP, quarter-waveplate.}
  \label{fg:setup}
\end{figure}
\begin{figure}
  \centering
  \includegraphics[width=0.48\textwidth]{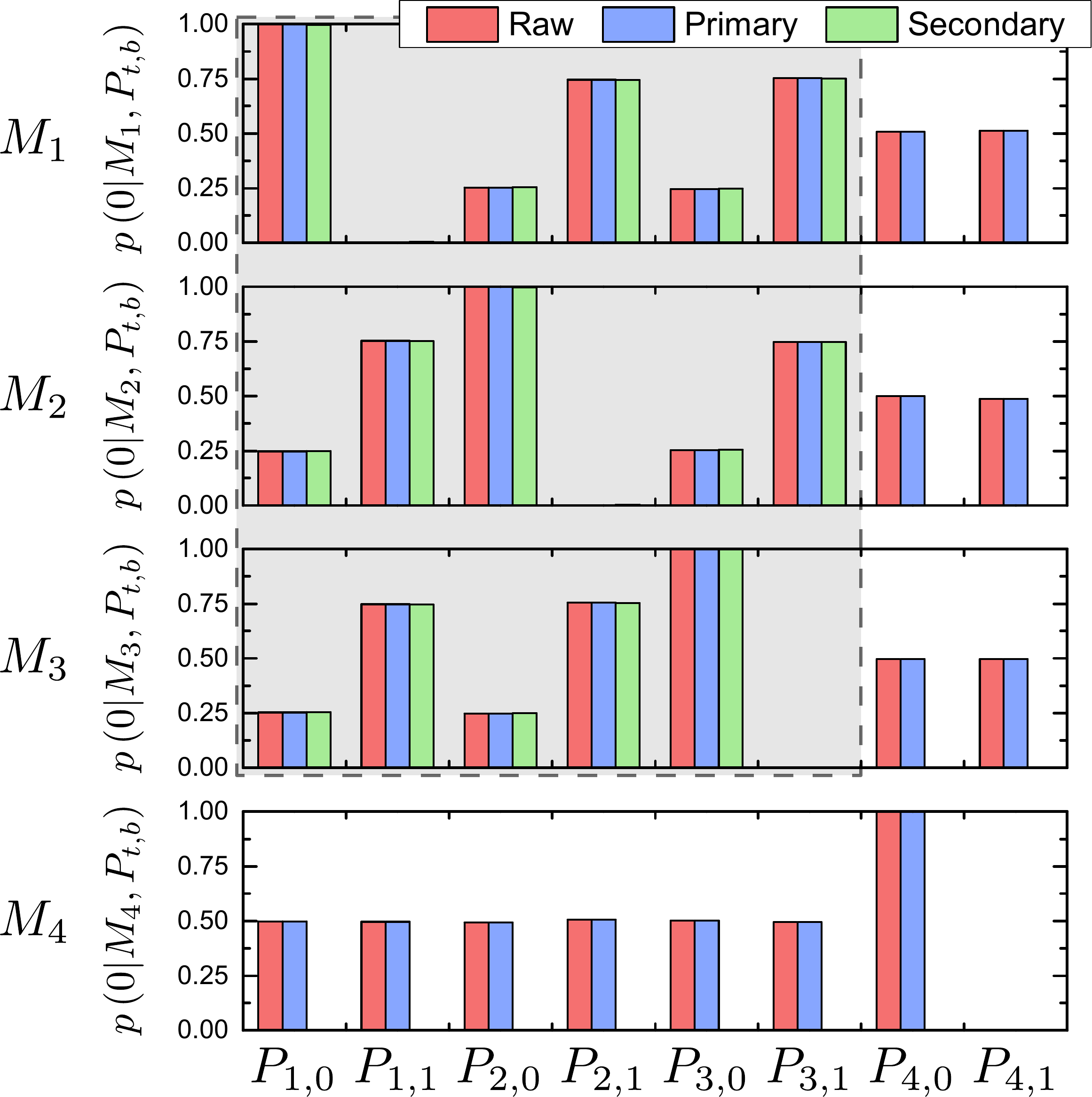}
  \caption{For every measurement-preparation pair, the probability of obtaining outcome 0 in the measurement. Red bars are relative frequencies calculated from the raw counts, blue bars are our estimates of the outcome probabilities of the primary measurements on the primary preparations obtained from a best-fit of the raw data, and green bars are our estimates of the outcome probabilities of the secondary measurements on the secondary preparations. The shaded grey background highlights the measurements and preparations for which secondary procedures were found.
  Error bars are not visible on this scale, neither are discrepancies between the obtained probabilities and the ideal values thereof,
  which are at most $0.013$; statistical error due to Poissonian count statistics is at most $0.002$.}
  \label{fg:freq_data}
\end{figure}
\begin{figure}
  \centering
  \includegraphics[width=0.48\textwidth]{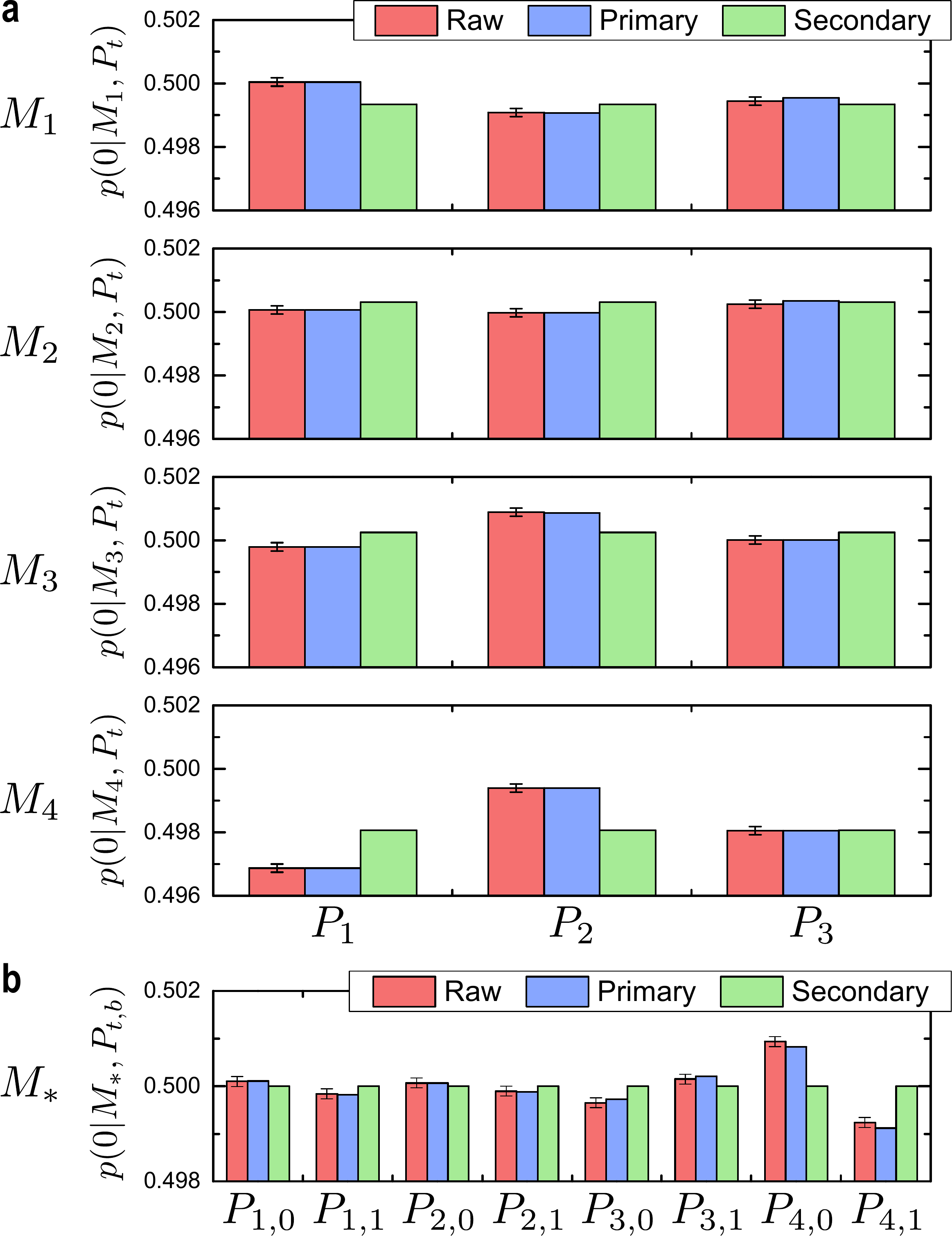}
  \caption{Operational statistics for raw, primary, and secondary preparations and measurements, averaged over 100 experimental runs. \textbf{a}, The probabilities of the primary measurements (blue bars) differ depending on which of the three mixed preparations $P_1^{\rm p}$, $P_2^{\rm p}$, and $P_3^{\rm p}$ are measured. These probabilities are within error of the raw data (red bars), indicating a GPT in which three two-outcome measurements are tomographically complete fits the data well. Probabilities for primary measurements on the secondary preparations (green bars) are independent of the preparation, hence the secondary preparations satisfy Eq.~\eqref{opequiv2}. Note that one expects these probabilities to deviate from 0.5. In the example of Fig.~\ref{fg:convex_combs}\textbf{c}, this corresponds to the fact that the intersection of the lines is not the completely mixed state. \textbf{b}, Outcome probabilities of measurement $M_*$ on the eight preparations. Red bars are raw data, blue bars are the measurement $M_*^{\rm p}$ on the primary preparations, and green bars are $M_*^{\rm s}$ on the primary preparations. Regardless of the input state, $M_*^{\rm s}$ returns outcome 0 with probability 0.5, hence it is operationally indistinguishable from a fair-coin flip (Eq.~\eqref{opequiv1}). Error bars in all plots are calculated assuming Poissonian count statistics.}
  \label{fg:opequivs}
\end{figure}
\begin{figure}
  \centering
  \includegraphics[width=0.48\textwidth]{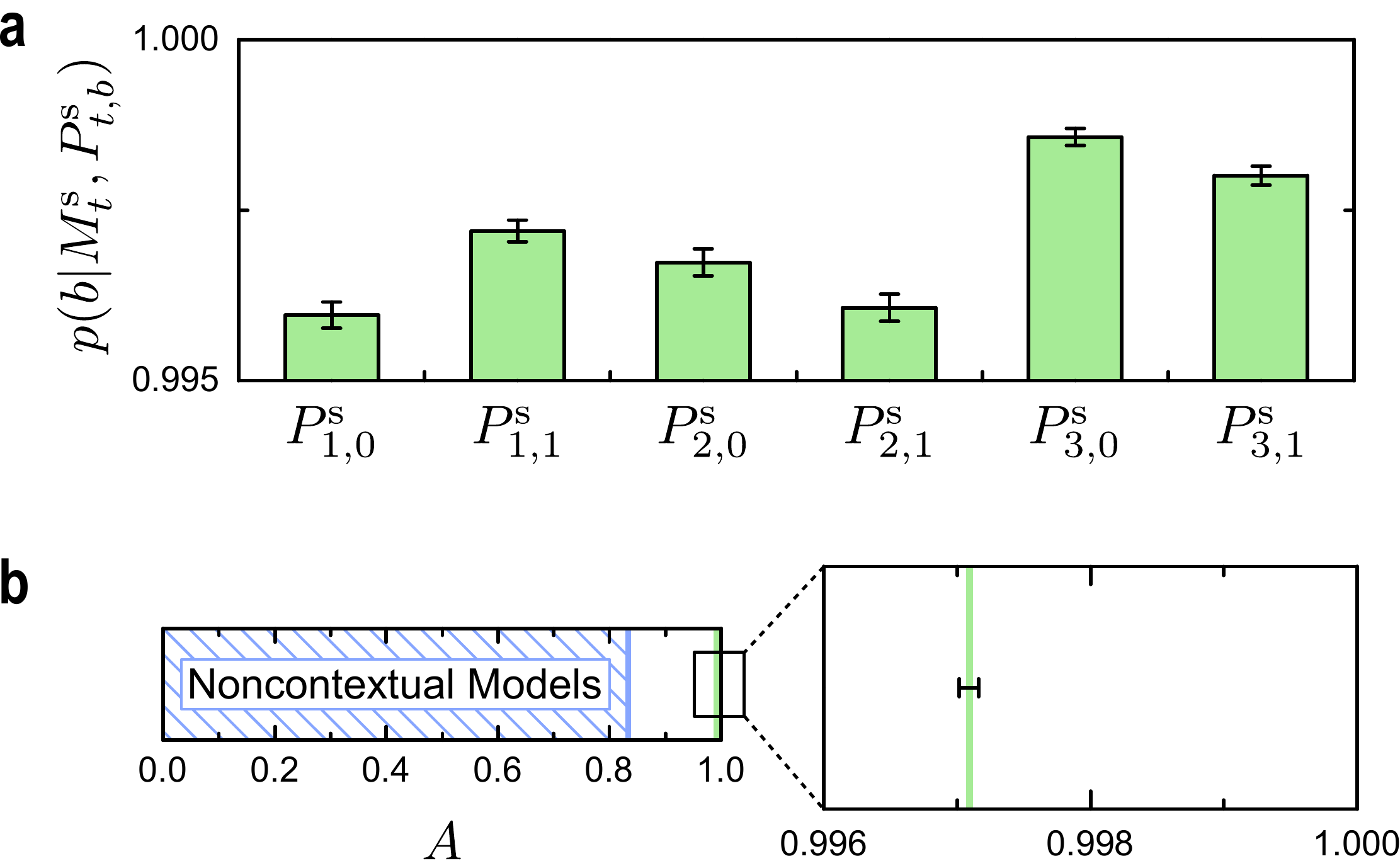}
  \caption{\textbf{a}, Values of the six degrees of correlation in Eq.~\eqref{maininequality}, averaged over 100 experimental runs. \textbf{b}, Average measured value for $A$ contrasted with the noncontextual bound $A=5/6$. We find $A=0.99709\pm0.00007$, which violates the noncontextual bound by 2300$\sigma$. Error bars in both plots represent the standard deviation in the average of the measured values over the 100 experimental runs.}
  \label{fg:data}
\end{figure}

\section{Experiment}
We use the polarization of single photons to test our noncontextuality inequality. The set-up, shown in Fig.~\ref{fg:setup}, consists of a heralded single-photon source\cit{kim06,fedrizzi07,biggerstaff09}, polarization-state preparation and polarization measurement. We generate photons using spontaneous parametric downconversion and prepare eight polarization states using a polarizer followed by a quarter-wave plate (QWP) and half-wave plate (HWP).  The four polarization measurements are performed using a HWP, QWP and polarizing beamsplitter. Photons are counted after the beamsplitter and the counts are taken to be fair samples of the true probabilities for obtaining each outcome for every preparation-measurement pair. Since the orientations of the preparation waveplates lead to small deflections of the beam, some information about the preparation gets encoded spatially, and similarly the measurement waveplates create sensitivity to spatial information; a single-mode fibre deals with both issues. For a single experimental run we implement each preparation-measurement pair for 4s (approximately $10^5$ counts). We performed 100 such runs.

Preparations are represented by vectors of raw data specifying the relative frequencies of outcomes for each measurement, uncertainties on which are calculated assuming Poissonian uncertainty in the photon counts. For each run, the raw data is fit to a set of states and effects in a GPT in which three binary-outcome measurements are tomographically complete.  This is done using a total weighted least-squares method\cit{krystek07,numrec}.  The average $\chi^2$ over the 100 runs is $3.9\pm0.3$, agreeing with the expected value of 4, and indicating that the model fits the data well.
The fit returns a $4 \times 8$ matrix that serves to define the 8 GPT states and 4 GPT effects, which are our estimates of the primary preparations and measurements. The column of this matrix associated to the $t,b$ preparation, which we denote $\mathbf{P}_{t,b}^{\rm p}$, specifies our estimate of the probabilities assigned by the primary preparation $P_{t,b}^{\rm p}$ to outcome `0' of each of the primary measurements.
The raw and primary data are compared in Fig.~\ref{fg:freq_data}. The probabilities are indistinguishable on this scale. We plot the probabilities for $P_1$, $P_2$, and $P_3$ in Fig.~\ref{fg:opequivs}\textbf{a} on a much finer scale. We then see that the primary data are within error of the raw data, as expected given the high quality of the fit to the GPT. However, the operational equivalences of Eqs.~\eqref{opequiv1} and ~\eqref{opequiv2} are not satisfied by our estimates of the primary preparations and measurements, illustrating the need for secondary procedures.

We define the six secondary preparations as probabilistic mixtures of the eight primaries: $\mathbf{P}^{\rm s}_{t,b} = \sum_{t'=1}^4\sum_{b'=0}^1 u_{t',b'}^{t,b} \mathbf{P}_{t',b'}^{\rm p}$, where the $u_{t',b'}^{t,b}$ are the weights in the mixture.
We maximize $C_{\rm P}=\frac{1}{6}\sum_{t=1}^3\sum_{b=0}^1 u_{t,b}^{t,b}$ over valid $u_{t',b'}^{t,b}$  subject to the constraint of
Eq.~\eqref{opequiv2}, that is,
$\frac{1}{2}\sum_b \mathbf{P}^{\rm s}_{1,b}=\frac{1}{2}\sum_b \mathbf{P}^{\rm s}_{2,b}=\frac{1}{2}\sum_b \mathbf{P}^{\rm s}_{3,b}$ (a linear program).
A high value of $C_{\rm P}$ ensures each of the six secondary preparations is close to its corresponding primary. Averaging over 100 runs, we find $C_{\rm P}=0.9969\pm0.0001$, close to the maximum of 1. An analogous linear program to select secondary measurements yields similar results. Fig.~\ref{fg:freq_data} also displays the outcome probabilities for the secondary procedures, confirming that they are close to ideal. Fig.~\ref{fg:opequivs} demonstrates how our construction enforces the operational equivalences.

We analyzed each experimental run separately and found the degree of correlation  $p(X{=}b|M^{\rm s}_t,P^{\rm s}_{t,b})$ for each value of $t$ and $b$.
The averages over the 100 runs are shown in Fig.~\ref{fg:data}\textbf{a} and are all in excess of 0.995.  Averaging over $t$ and $b$ yields
an experimental value $A = 0.99709 \pm 0.00007$, which violates the noncontextual bound of $5/6 \approx 0.833$ by 2300$\sigma$ (Fig.~\ref{fg:data}\textbf{b}).

\section{Discussion}

Using the techniques described here, it is possible to convert proofs of the failure of noncontextuality in quantum theory into experimental tests of noncontextuality that are robust to noise and experimental imprecisions\cit{kunjwal15,pusey15}.  For any phenomenon, therefore, one can determine which of its operational features are genuinely nonclassical.  This is likely to have applications for scientific fields wherein quantum effects are important and for developing novel quantum technologies.

The definition of operational equivalence of preparations (measurements) required them to be statistically equivalent relative to a tomographically complete set of measurements (preparations).
There are two examples of how the assumption of tomographic completeness
is expected {\em not} to hold exactly in our experiment, even if one grants the correctness of quantum theory.

First, our source produces a small multi-photon component.
We measure the $g^{(2)}(0)$ of our source\cit{grangier86} to be $0.0105 \pm 0.0001$ and from this we estimate the ratio of heralded detection events caused by multiple photons to those caused by single photons to be 1:4000.
Regardless of the value of $A$ one presumes for multi-photon events,
one can infer that the value of $A$ we would have achieved had the source been purely single-photon differs from the value given above by at most $10^{-6}$, a difference that does not affect our conclusions.

We also expect the assumption to not hold exactly because of the inevitable coupling of the polarization into the spatial degree of freedom of the photon, which could be caused, for example, by a wedge in a waveplate.
Indeed, we found that if the spatial filter was omitted from the experiment, our fitting routine returned large $\chi^2$ values, which we attributed to the fact that different angles of the waveplates led to different deflections of the beam.

A more abstract worry is that {\em nature} might conflict with the assumption (and prediction of quantum theory) that three independent binary-outcome measurements are tomographically complete for the polarization of a photon.
Our experiment has provided evidence in favour of the assumption insofar as we have fit data from {\em four} measurements to a theory where three are tomographically complete and found a good $\chi^2$ value for the fit.
One can imagine accumulating much more evidence of this sort, but it is difficult to see how any experiment could conclusively vindicate the assumption, given that one can never test all possible measurements. This, therefore, represents the most significant loophole in experimental tests of noncontextuality, and new ideas for how one might seal it or circumvent it represent the new frontier for improving such tests. \color{black}

\begin{acknowledgments}
 The authors thank Megan Agnew for assistance with data acquisition software.  This research was supported in part by the Natural Sciences and Engineering Research Council of Canada (NSERC), Canada Research Chairs, Ontario Centres of Excellence, Industry Canada, and the Canada Foundation for Innovation (CFI).  MDM acknowledges support from the Ontario Ministry of Training, Colleges, and Universities. RK thanks the Perimeter Institute for hospitality during his visit there, which was made possible in part through the support of a grant from the John Templeton Foundation and through the support of the Institute of Mathematical Sciences, Chennai.  Research at Perimeter Institute is supported by the Government of Canada through Industry Canada and by the Province of Ontario through the Ministry of Research and Innovation.
\end{acknowledgments}
\bibliography{bibl}

%merlin.mbs apsrev4-1.bst 2010-07-25 4.21a (PWD, AO, DPC) hacked
%Control: key (0)
%Control: author (8) initials jnrlst
%Control: editor formatted (1) identically to author
%Control: production of article title (-1) disabled
%Control: page (0) single
%Control: year (1) truncated
%Control: production of eprint (0) enabled
\begin{thebibliography}{53}%
\makeatletter
\providecommand \@ifxundefined [1]{%
 \@ifx{#1\undefined}
}%
\providecommand \@ifnum [1]{%
 \ifnum #1\expandafter \@firstoftwo
 \else \expandafter \@secondoftwo
 \fi
}%
\providecommand \@ifx [1]{%
 \ifx #1\expandafter \@firstoftwo
 \else \expandafter \@secondoftwo
 \fi
}%
\providecommand \natexlab [1]{#1}%
\providecommand \enquote  [1]{``#1''}%
\providecommand \bibnamefont  [1]{#1}%
\providecommand \bibfnamefont [1]{#1}%
\providecommand \citenamefont [1]{#1}%
\providecommand \href@noop [0]{\@secondoftwo}%
\providecommand \href [0]{\begingroup \@sanitize@url \@href}%
\providecommand \@href[1]{\@@startlink{#1}\@@href}%
\providecommand \@@href[1]{\endgroup#1\@@endlink}%
\providecommand \@sanitize@url [0]{\catcode `\\12\catcode `\$12\catcode
  `\&12\catcode `\#12\catcode `\^12\catcode `\_12\catcode `\%12\relax}%
\providecommand \@@startlink[1]{}%
\providecommand \@@endlink[0]{}%
\providecommand \url  [0]{\begingroup\@sanitize@url \@url }%
\providecommand \@url [1]{\endgroup\@href {#1}{\urlprefix }}%
\providecommand \urlprefix  [0]{URL }%
\providecommand \Eprint [0]{\href }%
\providecommand \doibase [0]{http://dx.doi.org/}%
\providecommand \selectlanguage [0]{\@gobble}%
\providecommand \bibinfo  [0]{\@secondoftwo}%
\providecommand \bibfield  [0]{\@secondoftwo}%
\providecommand \translation [1]{[#1]}%
\providecommand \BibitemOpen [0]{}%
\providecommand \bibitemStop [0]{}%
\providecommand \bibitemNoStop [0]{.\EOS\space}%
\providecommand \EOS [0]{\spacefactor3000\relax}%
\providecommand \BibitemShut  [1]{\csname bibitem#1\endcsname}%
\let\auto@bib@innerbib\@empty
%</preamble>
\bibitem [{\citenamefont {Bell}(1964)}]{bell64}%
  \BibitemOpen
  \bibfield  {author} {\bibinfo {author} {\bibfnamefont {J.~S.}\ \bibnamefont
  {Bell}},\ }\href@noop {} {\bibfield  {journal} {\bibinfo  {journal}
  {Physics}\ }\textbf {\bibinfo {volume} {1}},\ \bibinfo {pages} {195}
  (\bibinfo {year} {1964})}\BibitemShut {NoStop}%
\bibitem [{\citenamefont {Kochen}\ and\ \citenamefont
  {Specker}(1968)}]{kochen68}%
  \BibitemOpen
  \bibfield  {author} {\bibinfo {author} {\bibfnamefont {S.}~\bibnamefont
  {Kochen}}\ and\ \bibinfo {author} {\bibfnamefont {E.}~\bibnamefont
  {Specker}},\ }\href {\doibase 10.1512/iumj.1968.17.17004} {\bibfield
  {journal} {\bibinfo  {journal} {Indiana Univ. Math. J.}\ }\textbf {\bibinfo
  {volume} {17}},\ \bibinfo {pages} {59} (\bibinfo {year} {1968})}\BibitemShut
  {NoStop}%
\bibitem [{\citenamefont {Spekkens}(2008)}]{spekkens2008negativity}%
  \BibitemOpen
  \bibfield  {author} {\bibinfo {author} {\bibfnamefont {R.~W.}\ \bibnamefont
  {Spekkens}},\ }\href {\doibase 10.1103/PhysRevLett.101.020401} {\bibfield
  {journal} {\bibinfo  {journal} {Phys. Rev. Lett.}\ }\textbf {\bibinfo
  {volume} {101}},\ \bibinfo {pages} {20401} (\bibinfo {year} {2008})},\
  \Eprint {http://arxiv.org/abs/0710.5549} {arXiv:0710.5549} \BibitemShut
  {NoStop}%
\bibitem [{\citenamefont {Ferrie}\ and\ \citenamefont
  {Emerson}(2008)}]{ferrie2008frame}%
  \BibitemOpen
  \bibfield  {author} {\bibinfo {author} {\bibfnamefont {C.}~\bibnamefont
  {Ferrie}}\ and\ \bibinfo {author} {\bibfnamefont {J.}~\bibnamefont
  {Emerson}},\ }\href {\doibase 10.1088/1751-8113/41/35/352001} {\bibfield
  {journal} {\bibinfo  {journal} {J. Phys. A: Math. Theor.}\ }\textbf {\bibinfo
  {volume} {41}},\ \bibinfo {pages} {352001} (\bibinfo {year} {2008})},\
  \Eprint {http://arxiv.org/abs/1009.5213} {arXiv:1009.5213} \BibitemShut
  {NoStop}%
\bibitem [{\citenamefont {Spekkens}\ \emph {et~al.}(2009)\citenamefont
  {Spekkens}, \citenamefont {Buzacott}, \citenamefont {Keehn}, \citenamefont
  {Toner},\ and\ \citenamefont {Pryde}}]{spekkens09}%
  \BibitemOpen
  \bibfield  {author} {\bibinfo {author} {\bibfnamefont {R.~W.}\ \bibnamefont
  {Spekkens}}, \bibinfo {author} {\bibfnamefont {D.~H.}\ \bibnamefont
  {Buzacott}}, \bibinfo {author} {\bibfnamefont {A.~J.}\ \bibnamefont {Keehn}},
  \bibinfo {author} {\bibfnamefont {B.}~\bibnamefont {Toner}}, \ and\ \bibinfo
  {author} {\bibfnamefont {G.~J.}\ \bibnamefont {Pryde}},\ }\href {\doibase
  10.1103/PhysRevLett.102.010401} {\bibfield  {journal} {\bibinfo  {journal}
  {Phys. Rev. Lett.}\ }\textbf {\bibinfo {volume} {102}},\ \bibinfo {pages}
  {010401} (\bibinfo {year} {2009})},\ \Eprint {http://arxiv.org/abs/0805.1463}
  {arXiv:0805.1463} \BibitemShut {NoStop}%
\bibitem [{\citenamefont {Howard}\ \emph {et~al.}(2014)\citenamefont {Howard},
  \citenamefont {Wallman}, \citenamefont {Veitch},\ and\ \citenamefont
  {Emerson}}]{Howard2014}%
  \BibitemOpen
  \bibfield  {author} {\bibinfo {author} {\bibfnamefont {M.}~\bibnamefont
  {Howard}}, \bibinfo {author} {\bibfnamefont {J.}~\bibnamefont {Wallman}},
  \bibinfo {author} {\bibfnamefont {V.}~\bibnamefont {Veitch}}, \ and\ \bibinfo
  {author} {\bibfnamefont {J.}~\bibnamefont {Emerson}},\ }\href {\doibase
  10.1038/nature13460} {\bibfield  {journal} {\bibinfo  {journal} {Nature}\
  }\textbf {\bibinfo {volume} {510}},\ \bibinfo {pages} {351} (\bibinfo {year}
  {2014})},\ \Eprint {http://arxiv.org/abs/1401.4174} {arXiv:1401.4174}
  \BibitemShut {NoStop}%
\bibitem [{\citenamefont {Raussendorf}(2013)}]{raussendorf2013}%
  \BibitemOpen
  \bibfield  {author} {\bibinfo {author} {\bibfnamefont {R.}~\bibnamefont
  {Raussendorf}},\ }\href {\doibase 10.1103/PhysRevA.88.022322} {\bibfield
  {journal} {\bibinfo  {journal} {Phys. Rev. A}\ }\textbf {\bibinfo {volume}
  {88}},\ \bibinfo {pages} {022322} (\bibinfo {year} {2013})},\ \Eprint
  {http://arxiv.org/abs/0907.5449} {arXiv:0907.5449} \BibitemShut {NoStop}%
\bibitem [{\citenamefont {Hoban}\ \emph {et~al.}(2011)\citenamefont {Hoban},
  \citenamefont {Campbell}, \citenamefont {Loukopoulos},\ and\ \citenamefont
  {Browne}}]{hoban2011}%
  \BibitemOpen
  \bibfield  {author} {\bibinfo {author} {\bibfnamefont {M.~J.}\ \bibnamefont
  {Hoban}}, \bibinfo {author} {\bibfnamefont {E.~T.}\ \bibnamefont {Campbell}},
  \bibinfo {author} {\bibfnamefont {K.}~\bibnamefont {Loukopoulos}}, \ and\
  \bibinfo {author} {\bibfnamefont {D.~E.}\ \bibnamefont {Browne}},\ }\href
  {\doibase 10.1088/1367-2630/13/2/023014} {\bibfield  {journal} {\bibinfo
  {journal} {New J. Phys.}\ }\textbf {\bibinfo {volume} {13}},\ \bibinfo
  {pages} {023014} (\bibinfo {year} {2011})},\ \Eprint
  {http://arxiv.org/abs/1009.5213} {arXiv:1009.5213} \BibitemShut {NoStop}%
\bibitem [{\citenamefont {Pusey}(2014)}]{Pusey2014}%
  \BibitemOpen
  \bibfield  {author} {\bibinfo {author} {\bibfnamefont {M.~F.}\ \bibnamefont
  {Pusey}},\ }\href {\doibase 10.1103/PhysRevLett.113.200401} {\bibfield
  {journal} {\bibinfo  {journal} {Phys. Rev. Lett.}\ }\textbf {\bibinfo
  {volume} {113}},\ \bibinfo {pages} {200401} (\bibinfo {year} {2014})},\
  \Eprint {http://arxiv.org/abs/1409.1535} {arXiv:1409.1535} \BibitemShut
  {NoStop}%
\bibitem [{\citenamefont {Meyer}(1999)}]{meyer99finite}%
  \BibitemOpen
  \bibfield  {author} {\bibinfo {author} {\bibfnamefont {D.~A.}\ \bibnamefont
  {Meyer}},\ }\href {\doibase 10.1103/PhysRevLett.83.3751} {\bibfield
  {journal} {\bibinfo  {journal} {Phys. Rev. Lett.}\ }\textbf {\bibinfo
  {volume} {83}},\ \bibinfo {pages} {3751} (\bibinfo {year} {1999})},\ \Eprint
  {http://arxiv.org/abs/quant-ph/9905080} {arXiv:quant-ph/9905080} \BibitemShut
  {NoStop}%
\bibitem [{\citenamefont {Kent}(1999)}]{kent99noncontextual}%
  \BibitemOpen
  \bibfield  {author} {\bibinfo {author} {\bibfnamefont {A.}~\bibnamefont
  {Kent}},\ }\href {\doibase 10.1103/PhysRevLett.83.3755} {\bibfield  {journal}
  {\bibinfo  {journal} {Phys. Rev. Lett.}\ }\textbf {\bibinfo {volume} {83}},\
  \bibinfo {pages} {3755} (\bibinfo {year} {1999})},\ \Eprint
  {http://arxiv.org/abs/quant-ph/9906006} {arXiv:quant-ph/9906006} \BibitemShut
  {NoStop}%
\bibitem [{\citenamefont {Clifton}\ and\ \citenamefont
  {Kent}(2000)}]{cliftonkent}%
  \BibitemOpen
  \bibfield  {author} {\bibinfo {author} {\bibfnamefont {R.}~\bibnamefont
  {Clifton}}\ and\ \bibinfo {author} {\bibfnamefont {A.}~\bibnamefont {Kent}},\
  }\href {\doibase 10.1098/rspa.2000.0604} {\bibfield  {journal} {\bibinfo
  {journal} {Proc. R. Soc. Lond. A}\ }\textbf {\bibinfo {volume} {456}},\
  \bibinfo {pages} {2101} (\bibinfo {year} {2000})},\ \Eprint
  {http://arxiv.org/abs/quant-ph/9908031} {arXiv:quant-ph/9908031} \BibitemShut
  {NoStop}%
\bibitem [{\citenamefont {Mermin}(1999)}]{mermin1999kochen}%
  \BibitemOpen
  \bibfield  {author} {\bibinfo {author} {\bibfnamefont {N.~D.}\ \bibnamefont
  {Mermin}},\ }\href@noop {} {\enquote {\bibinfo {title} {A {K}ochen-{S}pecker
  theorem for imprecisely specified measurement},}\ } (\bibinfo {year}
  {1999}),\ \Eprint {http://arxiv.org/abs/quant-ph/9912081}
  {arXiv:quant-ph/9912081} \BibitemShut {NoStop}%
\bibitem [{\citenamefont {Simon}\ \emph {et~al.}(2001)\citenamefont {Simon},
  \citenamefont {Brukner},\ and\ \citenamefont {Zeilinger}}]{simon01hidden}%
  \BibitemOpen
  \bibfield  {author} {\bibinfo {author} {\bibfnamefont {C.}~\bibnamefont
  {Simon}}, \bibinfo {author} {\bibfnamefont {{\v{C}}.}~\bibnamefont
  {Brukner}}, \ and\ \bibinfo {author} {\bibfnamefont {A.}~\bibnamefont
  {Zeilinger}},\ }\href {\doibase 10.1103/PhysRevLett.86.4427} {\bibfield
  {journal} {\bibinfo  {journal} {Phys. Rev. Lett.}\ }\textbf {\bibinfo
  {volume} {86}},\ \bibinfo {pages} {4427} (\bibinfo {year}
  {2001})}\BibitemShut {NoStop}%
\bibitem [{\citenamefont {Larsson}(2002)}]{Larsson2002}%
  \BibitemOpen
  \bibfield  {author} {\bibinfo {author} {\bibfnamefont {J.-{\r{A}}.}\
  \bibnamefont {Larsson}},\ }\href {\doibase 10.1209/epl/i2002-00444-0}
  {\bibfield  {journal} {\bibinfo  {journal} {Europhys. Lett.}\ }\textbf
  {\bibinfo {volume} {58}},\ \bibinfo {pages} {799} (\bibinfo {year} {2002})},\
  \Eprint {http://arxiv.org/abs/quant-ph/0006134} {arXiv:quant-ph/0006134}
  \BibitemShut {NoStop}%
\bibitem [{\citenamefont {Barrett}\ and\ \citenamefont
  {Kent}(2004)}]{barrett04}%
  \BibitemOpen
  \bibfield  {author} {\bibinfo {author} {\bibfnamefont {J.}~\bibnamefont
  {Barrett}}\ and\ \bibinfo {author} {\bibfnamefont {A.}~\bibnamefont {Kent}},\
  }\href {\doibase 10.1016/j.shpsb.2003.10.003} {\bibfield  {journal} {\bibinfo
   {journal} {Stud. Hist. Phil. Mod. Phys.}\ }\textbf {\bibinfo {volume}
  {35}},\ \bibinfo {pages} {151 } (\bibinfo {year} {2004})},\ \Eprint
  {http://arxiv.org/abs/quant-ph/0309017} {arXiv:quant-ph/0309017} \BibitemShut
  {NoStop}%
\bibitem [{\citenamefont {Cabello}\ and\ \citenamefont
  {Garcia-Alcaine}(1998)}]{cabello98proposed}%
  \BibitemOpen
  \bibfield  {author} {\bibinfo {author} {\bibfnamefont {A.}~\bibnamefont
  {Cabello}}\ and\ \bibinfo {author} {\bibfnamefont {G.}~\bibnamefont
  {Garcia-Alcaine}},\ }\href {\doibase 10.1103/PhysRevLett.80.1797} {\bibfield
  {journal} {\bibinfo  {journal} {Phys. Rev. Lett.}\ }\textbf {\bibinfo
  {volume} {80}},\ \bibinfo {pages} {1797} (\bibinfo {year} {1998})},\ \Eprint
  {http://arxiv.org/abs/quant-ph/9709047} {arXiv:quant-ph/9709047} \BibitemShut
  {NoStop}%
\bibitem [{\citenamefont {Cabello}(2000)}]{cabello00kochen}%
  \BibitemOpen
  \bibfield  {author} {\bibinfo {author} {\bibfnamefont {A.}~\bibnamefont
  {Cabello}},\ }\href {\doibase 10.1142/S0217751X00002020} {\bibfield
  {journal} {\bibinfo  {journal} {Int. J. Mod. Phys. A}\ }\textbf {\bibinfo
  {volume} {15}},\ \bibinfo {pages} {2813} (\bibinfo {year} {2000})},\ \Eprint
  {http://arxiv.org/abs/quant-ph/9911022} {arXiv:quant-ph/9911022} \BibitemShut
  {NoStop}%
\bibitem [{\citenamefont {Simon}\ \emph {et~al.}(2000)\citenamefont {Simon},
  \citenamefont {\ifmmode~\dot{Z}\else \.{Z}\fi{}ukowski}, \citenamefont
  {Weinfurter},\ and\ \citenamefont {Zeilinger}}]{simon00feasible}%
  \BibitemOpen
  \bibfield  {author} {\bibinfo {author} {\bibfnamefont {C.}~\bibnamefont
  {Simon}}, \bibinfo {author} {\bibfnamefont {M.}~\bibnamefont
  {\ifmmode~\dot{Z}\else \.{Z}\fi{}ukowski}}, \bibinfo {author} {\bibfnamefont
  {H.}~\bibnamefont {Weinfurter}}, \ and\ \bibinfo {author} {\bibfnamefont
  {A.}~\bibnamefont {Zeilinger}},\ }\href {\doibase
  10.1103/PhysRevLett.85.1783} {\bibfield  {journal} {\bibinfo  {journal}
  {Phys. Rev. Lett.}\ }\textbf {\bibinfo {volume} {85}},\ \bibinfo {pages}
  {1783} (\bibinfo {year} {2000})},\ \Eprint
  {http://arxiv.org/abs/quant-ph/0009074} {arXiv:quant-ph/0009074} \BibitemShut
  {NoStop}%
\bibitem [{\citenamefont {Cabello}\ \emph {et~al.}(2008)\citenamefont
  {Cabello}, \citenamefont {Filipp}, \citenamefont {Rauch},\ and\ \citenamefont
  {Hasegawa}}]{cabello08proposed}%
  \BibitemOpen
  \bibfield  {author} {\bibinfo {author} {\bibfnamefont {A.}~\bibnamefont
  {Cabello}}, \bibinfo {author} {\bibfnamefont {S.}~\bibnamefont {Filipp}},
  \bibinfo {author} {\bibfnamefont {H.}~\bibnamefont {Rauch}}, \ and\ \bibinfo
  {author} {\bibfnamefont {Y.}~\bibnamefont {Hasegawa}},\ }\href {\doibase
  10.1103/PhysRevLett.100.130404} {\bibfield  {journal} {\bibinfo  {journal}
  {Phys. Rev. Lett.}\ }\textbf {\bibinfo {volume} {100}},\ \bibinfo {pages}
  {130404} (\bibinfo {year} {2008})},\ \Eprint {http://arxiv.org/abs/0804.1450}
  {arXiv:0804.1450} \BibitemShut {NoStop}%
\bibitem [{\citenamefont {Cabello}(2008)}]{cabello08experimentally}%
  \BibitemOpen
  \bibfield  {author} {\bibinfo {author} {\bibfnamefont {A.}~\bibnamefont
  {Cabello}},\ }\href {\doibase 10.1103/PhysRevLett.101.210401} {\bibfield
  {journal} {\bibinfo  {journal} {Phys. Rev. Lett.}\ }\textbf {\bibinfo
  {volume} {101}},\ \bibinfo {pages} {210401} (\bibinfo {year} {2008})},\
  \Eprint {http://arxiv.org/abs/0808.2456} {arXiv:0808.2456} \BibitemShut
  {NoStop}%
\bibitem [{\citenamefont {Badzia\ifmmode~\mbox{\c{}}\else \c{}\fi{}g}\ \emph
  {et~al.}(2009)\citenamefont {Badzia\ifmmode~\mbox{\c{}}\else \c{}\fi{}g},
  \citenamefont {Bengtsson}, \citenamefont {Cabello},\ and\ \citenamefont
  {Pitowsky}}]{badziag09universality}%
  \BibitemOpen
  \bibfield  {author} {\bibinfo {author} {\bibfnamefont {P.}~\bibnamefont
  {Badzia\ifmmode~\mbox{\c{}}\else \c{}\fi{}g}}, \bibinfo {author}
  {\bibfnamefont {I.}~\bibnamefont {Bengtsson}}, \bibinfo {author}
  {\bibfnamefont {A.}~\bibnamefont {Cabello}}, \ and\ \bibinfo {author}
  {\bibfnamefont {I.}~\bibnamefont {Pitowsky}},\ }\href {\doibase
  10.1103/PhysRevLett.103.050401} {\bibfield  {journal} {\bibinfo  {journal}
  {Phys. Rev. Lett.}\ }\textbf {\bibinfo {volume} {103}},\ \bibinfo {pages}
  {050401} (\bibinfo {year} {2009})},\ \Eprint {http://arxiv.org/abs/0809.0430}
  {arXiv:0809.0430} \BibitemShut {NoStop}%
\bibitem [{\citenamefont {G\"uhne}\ \emph {et~al.}(2010)\citenamefont
  {G\"uhne}, \citenamefont {Kleinmann}, \citenamefont {Cabello}, \citenamefont
  {Larsson}, \citenamefont {Kirchmair}, \citenamefont {Z\"ahringer},
  \citenamefont {Gerritsma},\ and\ \citenamefont
  {Roos}}]{Guhne10compatibility}%
  \BibitemOpen
  \bibfield  {author} {\bibinfo {author} {\bibfnamefont {O.}~\bibnamefont
  {G\"uhne}}, \bibinfo {author} {\bibfnamefont {M.}~\bibnamefont {Kleinmann}},
  \bibinfo {author} {\bibfnamefont {A.}~\bibnamefont {Cabello}}, \bibinfo
  {author} {\bibfnamefont {J.-A.}\ \bibnamefont {Larsson}}, \bibinfo {author}
  {\bibfnamefont {G.}~\bibnamefont {Kirchmair}}, \bibinfo {author}
  {\bibfnamefont {F.}~\bibnamefont {Z\"ahringer}}, \bibinfo {author}
  {\bibfnamefont {R.}~\bibnamefont {Gerritsma}}, \ and\ \bibinfo {author}
  {\bibfnamefont {C.~F.}\ \bibnamefont {Roos}},\ }\href {\doibase
  10.1103/PhysRevA.81.022121} {\bibfield  {journal} {\bibinfo  {journal} {Phys.
  Rev. A}\ }\textbf {\bibinfo {volume} {81}},\ \bibinfo {pages} {022121}
  (\bibinfo {year} {2010})},\ \Eprint {http://arxiv.org/abs/0912.4846}
  {arXiv:0912.4846} \BibitemShut {NoStop}%
\bibitem [{\citenamefont {Michler}\ \emph {et~al.}(2000)\citenamefont
  {Michler}, \citenamefont {Weinfurter},\ and\ \citenamefont
  {\.{Z}ukowski}}]{michler00}%
  \BibitemOpen
  \bibfield  {author} {\bibinfo {author} {\bibfnamefont {M.}~\bibnamefont
  {Michler}}, \bibinfo {author} {\bibfnamefont {H.}~\bibnamefont {Weinfurter}},
  \ and\ \bibinfo {author} {\bibfnamefont {M.}~\bibnamefont {\.{Z}ukowski}},\
  }\href {\doibase 10.1103/PhysRevLett.84.5457} {\bibfield  {journal} {\bibinfo
   {journal} {Phys. Rev. Lett.}\ }\textbf {\bibinfo {volume} {84}},\ \bibinfo
  {pages} {5457} (\bibinfo {year} {2000})},\ \Eprint
  {http://arxiv.org/abs/quant-ph/0009061} {arXiv:quant-ph/0009061} \BibitemShut
  {NoStop}%
\bibitem [{\citenamefont {Huang}\ \emph {et~al.}(2003)\citenamefont {Huang},
  \citenamefont {Li}, \citenamefont {Zhang}, \citenamefont {Pan},\ and\
  \citenamefont {Guo}}]{huang03}%
  \BibitemOpen
  \bibfield  {author} {\bibinfo {author} {\bibfnamefont {Y.-F.}\ \bibnamefont
  {Huang}}, \bibinfo {author} {\bibfnamefont {C.-F.}\ \bibnamefont {Li}},
  \bibinfo {author} {\bibfnamefont {Y.-S.}\ \bibnamefont {Zhang}}, \bibinfo
  {author} {\bibfnamefont {J.-W.}\ \bibnamefont {Pan}}, \ and\ \bibinfo
  {author} {\bibfnamefont {G.-C.}\ \bibnamefont {Guo}},\ }\href {\doibase
  10.1103/PhysRevLett.90.250401} {\bibfield  {journal} {\bibinfo  {journal}
  {Phys. Rev. Lett.}\ }\textbf {\bibinfo {volume} {90}},\ \bibinfo {pages}
  {250401} (\bibinfo {year} {2003})}\BibitemShut {NoStop}%
\bibitem [{\citenamefont {Hasegawa}\ \emph {et~al.}(2006)\citenamefont
  {Hasegawa}, \citenamefont {Loidl}, \citenamefont {Badurek}, \citenamefont
  {Baron},\ and\ \citenamefont {Rauch}}]{hasegawa06}%
  \BibitemOpen
  \bibfield  {author} {\bibinfo {author} {\bibfnamefont {Y.}~\bibnamefont
  {Hasegawa}}, \bibinfo {author} {\bibfnamefont {R.}~\bibnamefont {Loidl}},
  \bibinfo {author} {\bibfnamefont {G.}~\bibnamefont {Badurek}}, \bibinfo
  {author} {\bibfnamefont {M.}~\bibnamefont {Baron}}, \ and\ \bibinfo {author}
  {\bibfnamefont {H.}~\bibnamefont {Rauch}},\ }\href {\doibase
  10.1103/PhysRevLett.97.230401} {\bibfield  {journal} {\bibinfo  {journal}
  {Phys. Rev. Lett.}\ }\textbf {\bibinfo {volume} {97}},\ \bibinfo {pages}
  {230401} (\bibinfo {year} {2006})}\BibitemShut {NoStop}%
\bibitem [{\citenamefont {Gadway}\ \emph {et~al.}(2009)\citenamefont {Gadway},
  \citenamefont {Galvez},\ and\ \citenamefont {{De Zela}}}]{Gadway2009}%
  \BibitemOpen
  \bibfield  {author} {\bibinfo {author} {\bibfnamefont {B.~R.}\ \bibnamefont
  {Gadway}}, \bibinfo {author} {\bibfnamefont {E.~J.}\ \bibnamefont {Galvez}},
  \ and\ \bibinfo {author} {\bibfnamefont {F.}~\bibnamefont {{De Zela}}},\
  }\href {\doibase 10.1088/0953-4075/42/1/015503} {\bibfield  {journal}
  {\bibinfo  {journal} {J. Phys. B: At. Mol. Opt. Phys.}\ }\textbf {\bibinfo
  {volume} {42}},\ \bibinfo {pages} {015503} (\bibinfo {year}
  {2009})}\BibitemShut {NoStop}%
\bibitem [{\citenamefont {Bartosik}\ \emph {et~al.}(2009)\citenamefont
  {Bartosik}, \citenamefont {Klepp}, \citenamefont {Schmitzer}, \citenamefont
  {Sponar}, \citenamefont {Cabello}, \citenamefont {Rauch},\ and\ \citenamefont
  {Hasegawa}}]{bartosik09}%
  \BibitemOpen
  \bibfield  {author} {\bibinfo {author} {\bibfnamefont {H.}~\bibnamefont
  {Bartosik}}, \bibinfo {author} {\bibfnamefont {J.}~\bibnamefont {Klepp}},
  \bibinfo {author} {\bibfnamefont {C.}~\bibnamefont {Schmitzer}}, \bibinfo
  {author} {\bibfnamefont {S.}~\bibnamefont {Sponar}}, \bibinfo {author}
  {\bibfnamefont {A.}~\bibnamefont {Cabello}}, \bibinfo {author} {\bibfnamefont
  {H.}~\bibnamefont {Rauch}}, \ and\ \bibinfo {author} {\bibfnamefont
  {Y.}~\bibnamefont {Hasegawa}},\ }\href {\doibase
  10.1103/PhysRevLett.103.040403} {\bibfield  {journal} {\bibinfo  {journal}
  {Phys. Rev. Lett.}\ }\textbf {\bibinfo {volume} {103}},\ \bibinfo {pages}
  {040403} (\bibinfo {year} {2009})},\ \Eprint {http://arxiv.org/abs/0904.4576}
  {arXiv:0904.4576} \BibitemShut {NoStop}%
\bibitem [{\citenamefont {Kirchmair}\ \emph {et~al.}(2009)\citenamefont
  {Kirchmair}, \citenamefont {Zahringer}, \citenamefont {Gerritsma},
  \citenamefont {Kleinmann}, \citenamefont {Guhne}, \citenamefont {Cabello},
  \citenamefont {Blatt},\ and\ \citenamefont {Roos}}]{kirchmair09}%
  \BibitemOpen
  \bibfield  {author} {\bibinfo {author} {\bibfnamefont {G.}~\bibnamefont
  {Kirchmair}}, \bibinfo {author} {\bibfnamefont {F.}~\bibnamefont
  {Zahringer}}, \bibinfo {author} {\bibfnamefont {R.}~\bibnamefont
  {Gerritsma}}, \bibinfo {author} {\bibfnamefont {M.}~\bibnamefont
  {Kleinmann}}, \bibinfo {author} {\bibfnamefont {O.}~\bibnamefont {Guhne}},
  \bibinfo {author} {\bibfnamefont {A.}~\bibnamefont {Cabello}}, \bibinfo
  {author} {\bibfnamefont {R.}~\bibnamefont {Blatt}}, \ and\ \bibinfo {author}
  {\bibfnamefont {C.~F.}\ \bibnamefont {Roos}},\ }\href {\doibase
  10.1038/nature08172} {\bibfield  {journal} {\bibinfo  {journal} {Nature}\
  }\textbf {\bibinfo {volume} {460}},\ \bibinfo {pages} {494} (\bibinfo {year}
  {2009})},\ \Eprint {http://arxiv.org/abs/0904.1655} {arXiv:0904.1655}
  \BibitemShut {NoStop}%
\bibitem [{\citenamefont {Amselem}\ \emph {et~al.}(2009)\citenamefont
  {Amselem}, \citenamefont {R\aa{}dmark}, \citenamefont {Bourennane},\ and\
  \citenamefont {Cabello}}]{amselem09}%
  \BibitemOpen
  \bibfield  {author} {\bibinfo {author} {\bibfnamefont {E.}~\bibnamefont
  {Amselem}}, \bibinfo {author} {\bibfnamefont {M.}~\bibnamefont
  {R\aa{}dmark}}, \bibinfo {author} {\bibfnamefont {M.}~\bibnamefont
  {Bourennane}}, \ and\ \bibinfo {author} {\bibfnamefont {A.}~\bibnamefont
  {Cabello}},\ }\href {\doibase 10.1103/PhysRevLett.103.160405} {\bibfield
  {journal} {\bibinfo  {journal} {Phys. Rev. Lett.}\ }\textbf {\bibinfo
  {volume} {103}},\ \bibinfo {pages} {160405} (\bibinfo {year} {2009})},\
  \Eprint {http://arxiv.org/abs/0907.4494} {arXiv:0907.4494} \BibitemShut
  {NoStop}%
\bibitem [{\citenamefont {Moussa}\ \emph {et~al.}(2010)\citenamefont {Moussa},
  \citenamefont {Ryan}, \citenamefont {Cory},\ and\ \citenamefont
  {Laflamme}}]{moussa10}%
  \BibitemOpen
  \bibfield  {author} {\bibinfo {author} {\bibfnamefont {O.}~\bibnamefont
  {Moussa}}, \bibinfo {author} {\bibfnamefont {C.~A.}\ \bibnamefont {Ryan}},
  \bibinfo {author} {\bibfnamefont {D.~G.}\ \bibnamefont {Cory}}, \ and\
  \bibinfo {author} {\bibfnamefont {R.}~\bibnamefont {Laflamme}},\ }\href
  {\doibase 10.1103/PhysRevLett.104.160501} {\bibfield  {journal} {\bibinfo
  {journal} {Phys. Rev. Lett.}\ }\textbf {\bibinfo {volume} {104}},\ \bibinfo
  {pages} {160501} (\bibinfo {year} {2010})},\ \Eprint
  {http://arxiv.org/abs/0912.0485} {arXiv:0912.0485} \BibitemShut {NoStop}%
\bibitem [{\citenamefont {Lapkiewicz}\ \emph {et~al.}(2011)\citenamefont
  {Lapkiewicz}, \citenamefont {Li}, \citenamefont {Schaeff}, \citenamefont
  {Langford}, \citenamefont {Ramelow}, \citenamefont {Wiesniak},\ and\
  \citenamefont {Zeilinger}}]{lapkiewicz11}%
  \BibitemOpen
  \bibfield  {author} {\bibinfo {author} {\bibfnamefont {R.}~\bibnamefont
  {Lapkiewicz}}, \bibinfo {author} {\bibfnamefont {P.}~\bibnamefont {Li}},
  \bibinfo {author} {\bibfnamefont {C.}~\bibnamefont {Schaeff}}, \bibinfo
  {author} {\bibfnamefont {N.~K.}\ \bibnamefont {Langford}}, \bibinfo {author}
  {\bibfnamefont {S.}~\bibnamefont {Ramelow}}, \bibinfo {author} {\bibfnamefont
  {M.}~\bibnamefont {Wiesniak}}, \ and\ \bibinfo {author} {\bibfnamefont
  {A.}~\bibnamefont {Zeilinger}},\ }\href {\doibase 10.1038/nature10119}
  {\bibfield  {journal} {\bibinfo  {journal} {Nature}\ }\textbf {\bibinfo
  {volume} {474}},\ \bibinfo {pages} {490} (\bibinfo {year} {2011})},\ \Eprint
  {http://arxiv.org/abs/1106.4481} {arXiv:1106.4481} \BibitemShut {NoStop}%
\bibitem [{\citenamefont {Spekkens}(2014)}]{spekkens14}%
  \BibitemOpen
  \bibfield  {author} {\bibinfo {author} {\bibfnamefont {R.~W.}\ \bibnamefont
  {Spekkens}},\ }\href {\doibase 10.1007/s10701-014-9833-x} {\bibfield
  {journal} {\bibinfo  {journal} {Found. Phys.}\ }\textbf {\bibinfo {volume}
  {44}},\ \bibinfo {pages} {1125} (\bibinfo {year} {2014})},\ \Eprint
  {http://arxiv.org/abs/1312.3667} {arXiv:1312.3667} \BibitemShut {NoStop}%
\bibitem [{\citenamefont {Spekkens}(2005)}]{spekkens05}%
  \BibitemOpen
  \bibfield  {author} {\bibinfo {author} {\bibfnamefont {R.~W.}\ \bibnamefont
  {Spekkens}},\ }\href {\doibase 10.1103/PhysRevA.71.052108} {\bibfield
  {journal} {\bibinfo  {journal} {Phys. Rev. A}\ }\textbf {\bibinfo {volume}
  {71}},\ \bibinfo {pages} {052108} (\bibinfo {year} {2005})},\ \Eprint
  {http://arxiv.org/abs/quant-ph/0406166} {arXiv:quant-ph/0406166} \BibitemShut
  {NoStop}%
\bibitem [{\citenamefont {Clauser}\ \emph {et~al.}(1969)\citenamefont
  {Clauser}, \citenamefont {Horne}, \citenamefont {Shimony},\ and\
  \citenamefont {Holt}}]{clauser69}%
  \BibitemOpen
  \bibfield  {author} {\bibinfo {author} {\bibfnamefont {J.~F.}\ \bibnamefont
  {Clauser}}, \bibinfo {author} {\bibfnamefont {M.~A.}\ \bibnamefont {Horne}},
  \bibinfo {author} {\bibfnamefont {A.}~\bibnamefont {Shimony}}, \ and\
  \bibinfo {author} {\bibfnamefont {R.~A.}\ \bibnamefont {Holt}},\ }\href
  {\doibase 10.1103/PhysRevLett.23.880} {\bibfield  {journal} {\bibinfo
  {journal} {Phys. Rev. Lett.}\ }\textbf {\bibinfo {volume} {23}},\ \bibinfo
  {pages} {880} (\bibinfo {year} {1969})}\BibitemShut {NoStop}%
\bibitem [{\citenamefont {Nielsen}\ and\ \citenamefont
  {Chuang}(2011)}]{nielsenchuang}%
  \BibitemOpen
  \bibfield  {author} {\bibinfo {author} {\bibfnamefont {M.~A.}\ \bibnamefont
  {Nielsen}}\ and\ \bibinfo {author} {\bibfnamefont {I.~L.}\ \bibnamefont
  {Chuang}},\ }\href@noop {} {\emph {\bibinfo {title} {Quantum Computation and
  Quantum Information: 10th Anniversary Edition}}},\ \bibinfo {edition} {10th}\
  ed.\ (\bibinfo  {publisher} {Cambridge University Press},\ \bibinfo {address}
  {New York, NY, USA},\ \bibinfo {year} {2011})\BibitemShut {NoStop}%
\bibitem [{\citenamefont {James}\ \emph {et~al.}(2001)\citenamefont {James},
  \citenamefont {Kwiat}, \citenamefont {Munro},\ and\ \citenamefont
  {White}}]{james01}%
  \BibitemOpen
  \bibfield  {author} {\bibinfo {author} {\bibfnamefont {D.~F.~V.}\
  \bibnamefont {James}}, \bibinfo {author} {\bibfnamefont {P.~G.}\ \bibnamefont
  {Kwiat}}, \bibinfo {author} {\bibfnamefont {W.~J.}\ \bibnamefont {Munro}}, \
  and\ \bibinfo {author} {\bibfnamefont {A.~G.}\ \bibnamefont {White}},\ }\href
  {\doibase 10.1103/PhysRevA.64.052312} {\bibfield  {journal} {\bibinfo
  {journal} {Phys. Rev. A}\ }\textbf {\bibinfo {volume} {64}},\ \bibinfo
  {pages} {052312} (\bibinfo {year} {2001})},\ \Eprint
  {http://arxiv.org/abs/quant-ph/0103121} {arXiv:quant-ph/0103121} \BibitemShut
  {NoStop}%
\bibitem [{\citenamefont {Lundeen}\ \emph {et~al.}(2009)\citenamefont
  {Lundeen}, \citenamefont {Feito}, \citenamefont {Coldenstrodt-Ronge},
  \citenamefont {Pregnell}, \citenamefont {Silberhorn}, \citenamefont {Ralph},
  \citenamefont {Eisert}, \citenamefont {Plenio},\ and\ \citenamefont
  {Walmsley}}]{lundeen09}%
  \BibitemOpen
  \bibfield  {author} {\bibinfo {author} {\bibfnamefont {J.~S.}\ \bibnamefont
  {Lundeen}}, \bibinfo {author} {\bibfnamefont {A.}~\bibnamefont {Feito}},
  \bibinfo {author} {\bibfnamefont {H.}~\bibnamefont {Coldenstrodt-Ronge}},
  \bibinfo {author} {\bibfnamefont {K.~L.}\ \bibnamefont {Pregnell}}, \bibinfo
  {author} {\bibfnamefont {C.}~\bibnamefont {Silberhorn}}, \bibinfo {author}
  {\bibfnamefont {T.~C.}\ \bibnamefont {Ralph}}, \bibinfo {author}
  {\bibfnamefont {J.}~\bibnamefont {Eisert}}, \bibinfo {author} {\bibfnamefont
  {M.~B.}\ \bibnamefont {Plenio}}, \ and\ \bibinfo {author} {\bibfnamefont
  {I.~A.}\ \bibnamefont {Walmsley}},\ }\href {\doibase 10.1038/nphys1133}
  {\bibfield  {journal} {\bibinfo  {journal} {Nature Phys.}\ }\textbf {\bibinfo
  {volume} {5}},\ \bibinfo {pages} {27} (\bibinfo {year} {2009})},\ \Eprint
  {http://arxiv.org/abs/0807.2444} {arXiv:0807.2444} \BibitemShut {NoStop}%
\bibitem [{\citenamefont {Stark}(2014)}]{stark14}%
  \BibitemOpen
  \bibfield  {author} {\bibinfo {author} {\bibfnamefont {C.}~\bibnamefont
  {Stark}},\ }\href {\doibase 10.1103/PhysRevA.89.052109} {\bibfield  {journal}
  {\bibinfo  {journal} {Phys. Rev. A}\ }\textbf {\bibinfo {volume} {89}},\
  \bibinfo {pages} {052109} (\bibinfo {year} {2014})},\ \Eprint
  {http://arxiv.org/abs/1209.5737} {arXiv:1209.5737} \BibitemShut {NoStop}%
\bibitem [{\citenamefont {Stark}\ and\ \citenamefont
  {Harrow}(2014)}]{stark14b}%
  \BibitemOpen
  \bibfield  {author} {\bibinfo {author} {\bibfnamefont {C.~J.}\ \bibnamefont
  {Stark}}\ and\ \bibinfo {author} {\bibfnamefont {A.~W.}\ \bibnamefont
  {Harrow}},\ }\href@noop {} {\enquote {\bibinfo {title} {Compressibility of
  positive semidefinite factorizations and quantum models},}\ } (\bibinfo
  {year} {2014}),\ \Eprint {http://arxiv.org/abs/1412.7437} {arXiv:1412.7437}
  \BibitemShut {NoStop}%
\bibitem [{\citenamefont {Hardy}(2001)}]{hardy01}%
  \BibitemOpen
  \bibfield  {author} {\bibinfo {author} {\bibfnamefont {L.}~\bibnamefont
  {Hardy}},\ }\href@noop {} {\enquote {\bibinfo {title} {Quantum theory from
  five reasonable axioms},}\ } (\bibinfo {year} {2001}),\ \Eprint
  {http://arxiv.org/abs/quant-ph/0101012} {arXiv:quant-ph/0101012} \BibitemShut
  {NoStop}%
\bibitem [{\citenamefont {Barrett}(2007)}]{barrett07}%
  \BibitemOpen
  \bibfield  {author} {\bibinfo {author} {\bibfnamefont {J.}~\bibnamefont
  {Barrett}},\ }\href {\doibase 10.1103/PhysRevA.75.032304} {\bibfield
  {journal} {\bibinfo  {journal} {Phys. Rev. A}\ }\textbf {\bibinfo {volume}
  {75}},\ \bibinfo {pages} {032304} (\bibinfo {year} {2007})},\ \Eprint
  {http://arxiv.org/abs/quant-ph/0508211} {arXiv:quant-ph/0508211} \BibitemShut
  {NoStop}%
\bibitem [{\citenamefont {Kim}\ \emph {et~al.}(2006)\citenamefont {Kim},
  \citenamefont {Fiorentino},\ and\ \citenamefont {Wong}}]{kim06}%
  \BibitemOpen
  \bibfield  {author} {\bibinfo {author} {\bibfnamefont {T.}~\bibnamefont
  {Kim}}, \bibinfo {author} {\bibfnamefont {M.}~\bibnamefont {Fiorentino}}, \
  and\ \bibinfo {author} {\bibfnamefont {F.~N.~C.}\ \bibnamefont {Wong}},\
  }\href {\doibase 10.1103/PhysRevA.73.012316} {\bibfield  {journal} {\bibinfo
  {journal} {Phys. Rev. A}\ }\textbf {\bibinfo {volume} {73}},\ \bibinfo
  {pages} {012316} (\bibinfo {year} {2006})},\ \Eprint
  {http://arxiv.org/abs/quant-ph/0509219} {arXiv:quant-ph/0509219} \BibitemShut
  {NoStop}%
\bibitem [{\citenamefont {Fedrizzi}\ \emph {et~al.}(2007)\citenamefont
  {Fedrizzi}, \citenamefont {Herbst}, \citenamefont {Poppe}, \citenamefont
  {Jennewein},\ and\ \citenamefont {Zeilinger}}]{fedrizzi07}%
  \BibitemOpen
  \bibfield  {author} {\bibinfo {author} {\bibfnamefont {A.}~\bibnamefont
  {Fedrizzi}}, \bibinfo {author} {\bibfnamefont {T.}~\bibnamefont {Herbst}},
  \bibinfo {author} {\bibfnamefont {A.}~\bibnamefont {Poppe}}, \bibinfo
  {author} {\bibfnamefont {T.}~\bibnamefont {Jennewein}}, \ and\ \bibinfo
  {author} {\bibfnamefont {A.}~\bibnamefont {Zeilinger}},\ }\href {\doibase
  10.1364/OE.15.015377} {\bibfield  {journal} {\bibinfo  {journal} {Opt.
  Express}\ }\textbf {\bibinfo {volume} {15}},\ \bibinfo {pages} {15377}
  (\bibinfo {year} {2007})},\ \Eprint {http://arxiv.org/abs/0706.2877}
  {arXiv:0706.2877} \BibitemShut {NoStop}%
\bibitem [{\citenamefont {Biggerstaff}\ \emph {et~al.}(2009)\citenamefont
  {Biggerstaff}, \citenamefont {Kaltenbaek}, \citenamefont {Hamel},
  \citenamefont {Weihs}, \citenamefont {Rudolph},\ and\ \citenamefont
  {Resch}}]{biggerstaff09}%
  \BibitemOpen
  \bibfield  {author} {\bibinfo {author} {\bibfnamefont {D.~N.}\ \bibnamefont
  {Biggerstaff}}, \bibinfo {author} {\bibfnamefont {R.}~\bibnamefont
  {Kaltenbaek}}, \bibinfo {author} {\bibfnamefont {D.~R.}\ \bibnamefont
  {Hamel}}, \bibinfo {author} {\bibfnamefont {G.}~\bibnamefont {Weihs}},
  \bibinfo {author} {\bibfnamefont {T.}~\bibnamefont {Rudolph}}, \ and\
  \bibinfo {author} {\bibfnamefont {K.~J.}\ \bibnamefont {Resch}},\ }\href
  {\doibase 10.1103/PhysRevLett.103.240504} {\bibfield  {journal} {\bibinfo
  {journal} {Phys. Rev. Lett.}\ }\textbf {\bibinfo {volume} {103}},\ \bibinfo
  {pages} {240504} (\bibinfo {year} {2009})},\ \Eprint
  {http://arxiv.org/abs/0909.2843} {arXiv:0909.2843} \BibitemShut {NoStop}%
\bibitem [{\citenamefont {Krystek}\ and\ \citenamefont
  {Anton}(2007)}]{krystek07}%
  \BibitemOpen
  \bibfield  {author} {\bibinfo {author} {\bibfnamefont {M.}~\bibnamefont
  {Krystek}}\ and\ \bibinfo {author} {\bibfnamefont {M.}~\bibnamefont
  {Anton}},\ }\href {\doibase 10.1088/0957-0233/18/11/025} {\bibfield
  {journal} {\bibinfo  {journal} {Meas. Sci. Technol.}\ }\textbf {\bibinfo
  {volume} {18}},\ \bibinfo {pages} {3438} (\bibinfo {year}
  {2007})}\BibitemShut {NoStop}%
\bibitem [{\citenamefont {Press}\ \emph {et~al.}(2007)\citenamefont {Press},
  \citenamefont {Teukolsky}, \citenamefont {Vetterling},\ and\ \citenamefont
  {Flannery}}]{numrec}%
  \BibitemOpen
  \bibfield  {author} {\bibinfo {author} {\bibfnamefont {W.~H.}\ \bibnamefont
  {Press}}, \bibinfo {author} {\bibfnamefont {S.~A.}\ \bibnamefont
  {Teukolsky}}, \bibinfo {author} {\bibfnamefont {W.~T.}\ \bibnamefont
  {Vetterling}}, \ and\ \bibinfo {author} {\bibfnamefont {B.~P.}\ \bibnamefont
  {Flannery}},\ }\href@noop {} {\emph {\bibinfo {title} {Numerical Recipes: The
  Art of Scientific Computing}}},\ \bibinfo {edition} {3rd}\ ed.\ (\bibinfo
  {publisher} {Cambridge University Press},\ \bibinfo {address} {New York},\
  \bibinfo {year} {2007})\BibitemShut {NoStop}%
\bibitem [{\citenamefont {Kunjwal}\ and\ \citenamefont
  {Spekkens}()}]{kunjwal15}%
  \BibitemOpen
  \bibfield  {author} {\bibinfo {author} {\bibfnamefont {R.}~\bibnamefont
  {Kunjwal}}\ and\ \bibinfo {author} {\bibfnamefont {R.~W.}\ \bibnamefont
  {Spekkens}},\ }\href@noop {} {\enquote {\bibinfo {title} {From the
  {K}ochen-{S}pecker theorem to noncontextuality inequalities},}\ }\bibinfo
  {note} {(in preparation)}\BibitemShut {NoStop}%
\bibitem [{\citenamefont {Pusey}()}]{pusey15}%
  \BibitemOpen
  \bibfield  {author} {\bibinfo {author} {\bibfnamefont {M.~F.}\ \bibnamefont
  {Pusey}},\ }\href@noop {} {\enquote {\bibinfo {title} {The robust
  noncontextuality inequalities in the simplest scenario},}\ }\bibinfo {note}
  {(in preparation)}\BibitemShut {NoStop}%
\bibitem [{\citenamefont {Grangier}\ \emph {et~al.}(1986)\citenamefont
  {Grangier}, \citenamefont {Roger},\ and\ \citenamefont
  {Aspect}}]{grangier86}%
  \BibitemOpen
  \bibfield  {author} {\bibinfo {author} {\bibfnamefont {P.}~\bibnamefont
  {Grangier}}, \bibinfo {author} {\bibfnamefont {G.}~\bibnamefont {Roger}}, \
  and\ \bibinfo {author} {\bibfnamefont {A.}~\bibnamefont {Aspect}},\ }\href
  {\doibase 10.1209/0295-5075/1/4/004} {\bibfield  {journal} {\bibinfo
  {journal} {Europhys. Lett.}\ }\textbf {\bibinfo {volume} {1}},\ \bibinfo
  {pages} {173} (\bibinfo {year} {1986})}\BibitemShut {NoStop}%
\bibitem [{\citenamefont {Bell}(1976)}]{bell76}%
  \BibitemOpen
  \bibfield  {author} {\bibinfo {author} {\bibfnamefont {J.~S.}\ \bibnamefont
  {Bell}},\ }\href@noop {} {\bibfield  {journal} {\bibinfo  {journal}
  {Epistemological Lett.}\ }\textbf {\bibinfo {volume} {9}},\ \bibinfo {pages}
  {11} (\bibinfo {year} {1976})},\ \bibinfo {note} {(reproduced in \href
  {\doibase 10.1111/j.1746-8361.1985.tb01249.x} {\bibfield {journal} {\bibinfo
  {journal} {Dialectica}\ }\textbf {\bibinfo {volume} {39}},\ \bibinfo {pages}
  {85} (\bibinfo {year} {1985})})}\BibitemShut {NoStop}%
\bibitem [{\citenamefont {Einstein}\ \emph {et~al.}(1935)\citenamefont
  {Einstein}, \citenamefont {Podolsky},\ and\ \citenamefont
  {Rosen}}]{Einstein1935}%
  \BibitemOpen
  \bibfield  {author} {\bibinfo {author} {\bibfnamefont {A.}~\bibnamefont
  {Einstein}}, \bibinfo {author} {\bibfnamefont {B.}~\bibnamefont {Podolsky}},
  \ and\ \bibinfo {author} {\bibfnamefont {N.}~\bibnamefont {Rosen}},\ }\href
  {\doibase 10.1103/PhysRev.47.777} {\bibfield  {journal} {\bibinfo  {journal}
  {Phys. Rev.}\ }\textbf {\bibinfo {volume} {47}},\ \bibinfo {pages} {777}
  (\bibinfo {year} {1935})}\BibitemShut {NoStop}%
\bibitem [{\citenamefont {Harrigan}\ and\ \citenamefont
  {Spekkens}(2010)}]{Harrigan2010}%
  \BibitemOpen
  \bibfield  {author} {\bibinfo {author} {\bibfnamefont {N.}~\bibnamefont
  {Harrigan}}\ and\ \bibinfo {author} {\bibfnamefont {R.~W.}\ \bibnamefont
  {Spekkens}},\ }\href {\doibase 10.1007/s10701-009-9347-0} {\bibfield
  {journal} {\bibinfo  {journal} {Found. Phys.}\ }\textbf {\bibinfo {volume}
  {40}},\ \bibinfo {pages} {125} (\bibinfo {year} {2010})},\ \Eprint
  {http://arxiv.org/abs/0706.2661} {arXiv:0706.2661} \BibitemShut {NoStop}%
\end{thebibliography}%

\appendix
\setcounter{section}{0}

\section{Elaboration of the notion of noncontextuality and the idealizations of previous proposals for tests}
\label{suppA}

In this article, we have used the operational notion of noncontextuality proposed in Ref.~\cite{spekkens05}.
According to this notion, one can distinguish noncontextuality for measurements and noncontextuality for preparations.
To provide formal definitions, we must first review the notion of operational equivalence.
Recall that an operational theory specifies a set of physically possible measurements, $\mathcal{M}$, and a set of physically possible preparations, $\mathcal{P}$.  Each measurement $M\in\mathcal{M}$ and preparation $P\in \mathcal{P}$ is assumed to be given as a list of instructions of what to do in the laboratory.  An operational theory also specifies a function $p$, which determines,
 for every preparation $P\in \mathcal{P}$ and every measurement $M\in \mathcal{M}$, the probability distribution over the outcome $X$ of the measurement when it is implemented on that preparation, $p(X|M,P)$.

Two measurement procedures, $M$ and $M'$, are said to be operationally equivalent if they have the same distribution over outcomes for all preparation procedures,
\beqa
p(X|M,P)=p(X|M',P),\;\forall P\in \mathcal{P}
\label{opequivmmts}
\eeqa
Two preparation procedures, $P$ and $P'$, are said to be operationally equivalent if they yield the same distribution over outcomes for all measurement procedures,
\beqa
p(X|M,P')=p(X|M,P),\;\forall M\in \mathcal{M}
\label{opequivpreps}
\eeqa

Any parameters that can be used to describe differences between the measurement procedures in a given operational equivalence class are considered to be part of the {\em measurement context}.   Similarly, parameters that describe differences between preparation procedures in a given operational equivalence class are considered to be part of the {\em preparation context}.  This terminological convention explains the suitability of the term {\em context-independent} or {\em noncontextual} for an ontological model wherein the representation of a given preparation or measurement depends {\em only} on the equivalence class to which it belongs (as defined below).

A {\em tomographically complete set} of preparation procedures, $\mathcal{P}_{\rm tomo} \subseteq \mathcal{P}$,  is defined as one that is sufficient for determining the statistics for any other preparation procedure, and hence is sufficient for deciding operational equivalence of measurements.  In other words, one can equally well define operational equivalence of measurements $M$ and $M'$ by
\beqa
p(X|M,P)=p(X|M',P),\;\forall P\in \mathcal{P}_{\rm tomo}
\label{opequivmmts2}
\eeqa
Similarly, a tomographically complete set of measurement procedures, $\mathcal{M}_{\rm tomo} \subseteq \mathcal{M}$, is defined as one that is sufficient for determining the statistics for any other measurement procedure, and hence is sufficient for deciding operational equivalence of preparations, such that we can define operational equivalence of preparations $P$ and $P'$ by
\beqa
p(X|M,P')=p(X|M,P),\;\forall M\in \mathcal{M}_{\rm tomo}
\label{opequivpreps2}
\eeqa

Note that if the tomographically complete set of preparations for a given system has infinite cardinality, then it is impossible to test operational equivalence experimentally.  In quantum theory, the tomographically complete set for any finite-dimensional system has finite cardinality.

Recall that an ontological model of an operational theory specifies a space $\Lambda$ of ontic states, where an ontic state is defined as a specification of the values of a set of classical variable that mediate the causal influence of the preparation on the measurement.  An ontological model also specifies, for every preparation $P\in \mathcal{P}$, a distribution $\mu(\lambda|P)$.  The idea is that when the preparation $P$ is implemented on a system, it emerges from the preparation device in an ontic state $\lambda$, where $\lambda$ need not be fixed by $P$ but is instead obtained by sampling from the distribution  $\mu(\lambda|P)$.  Similarly, for every measurement $M\in \mathcal{M}$, an ontological model specifies the probabilistic response of the measurement to $\lambda$, specified as a conditional probability $\xi(X|M,\lambda)$ where $X$ is a variable associated to the outcome of $M$.  The idea here is that when an ontic state $\lambda$ is fed into the measurement $M$, it need not fix the outcome $X$, but the outcome is
sampled from the distribution $\xi(X|M,\lambda)$.

The assumption of {\em measurement noncontextuality} is that measurements that are
operationally equivalent should be represented by the same conditional probability distributions in the ontological model,
\beqa
p(X|M,P)=p(X|M',P),\;\forall P\in \mathcal{P}_{\rm tomo} \nonumber\\
\to \xi(X|M,\lambda)=\xi(X|M',\lambda),\; \forall \lambda\in\Lambda.
\label{mnc}
\eeqa
The assumption of {\em preparation noncontextuality} is that preparations that are
operationally equivalent should be represented by the same distributions over ontic states in the ontological model
\beqa
p(X|M,P)=p(X|M,P'),\;\forall M\in \mathcal{M}_{\rm tomo}\nonumber\\
 \to \mu(\lambda|P)=\mu(\lambda|P'),\; \forall \lambda\in\Lambda.
 \label{pnc}
\eeqa
A model is termed simply  {\em noncontextual} if it is measurement noncontextual and preparation noncontextual.

We can summarize this as follows. The grounds for thinking that two measurement procedures are associated with the {\em same} observable, and hence that they are represented equivalently in the noncontextual model, is that they give equivalent statistics for all preparation procedures.  Similarly, two preparations are represented equivalently in the noncontextual model only if they yield the same statistics for all measurements.

The notion of noncontextuality can be understood as a version of Leibniz's Principle of the Identity of Indiscernables, specifically, the {\em physical} identity of {\em operational} indiscernables.  Other instances of the principle's use in physics include the inference from the lack of superluminal signals to the lack of superluminal causal influences (which justifies Bell's assumption of local causality~\cite{bell76}), and Einstein's inference from the operational indistinguishability of accelerating frames and frames fixed in a gravitational field to the physical equivalence of such frames. The question of whether nature admits of a noncontextual model can be understood as whether it adheres to this version of Leibniz's principle, at least within the framework of ontological models that underlies the discussion of noncontextuality.

It is argued in Ref.~\cite{spekkens05} that because the principle
underlying measurement noncontextuality
is the same as the one underlying preparation noncontextuality, if one assumes the first, then one should also assume the second.

As is shown in Ref.~\cite{spekkens05}, the traditional notion of noncontextuality, due to Kochen and Specker~\cite{kochen68}, can be understood as an application of measurement noncontextuality to projective measurements in quantum theory, but involves furthermore an additional assumption that projective measurements should have a {\em deterministic} response to the ontic state.

The idealization of noiseless measurements that we highlighted as a problem of previous attempts to provide an experimental test of noncontextuality can be equivalently characterized as the idealization of deterministic responses of the measurements, as we will now show.

First, recall that determinism is not an assumption of Bell's theorem.  Borrowing an argument from Einstein, Podolsky and Rosen~\cite{Einstein1935}, Bell's 1964 argument~\cite{bell64} leveraged a prediction of quantum theory---that if the same measurement is implemented on two halves of a singlet state, then the outcomes will be perfectly anticorrelated---to {\em derive} the fact that the local outcome-assignments must be deterministic, and from this the first Bell inequality.
But given that experimental correlations are never perfect, no experiment can ever justify determinism. This is why experimentalists use the  Clauser-Horne-Shimony-Holt inequality~\cite{clauser69} to test local causality.

 In Ref.~\cite{spekkens05}, it was shown that if one makes an assumption of noncontextuality for preparations as well as for measurements, then one can also {\em derive} the fact that projective measurements should respond deterministically to the ontic state.
 The inference relies on certain predictions of quantum theory, in particular, that for every projective rank-1 measurement, there is a basis of quantum states that makes its outcome perfectly predictable~\cite{spekkens05,spekkens14}.   However, as in the case of Bell's original inequality, the ideal of perfect predictability is not realized in any experiment.  In particular, perfect predictability only holds under the idealization of noiseless measurements, which is never achieved in practice.

 It is in this sense that previous proposals for testing noncontextuality can be understood as having made an unwarranted idealization of noiseless measurements.

The second idealization that we address in this article concerns the impossibility of realizing any two procedures that satisfy operational equivalence  {\em exactly}.  No two experimental procedures ever give {\em precisely} the same statistics. In formal terms, for any two measurements $M$ and $M'$ that one realizes in the laboratory, it is never the case that one achieves precise equality in Eq.~\eqref{opequivmmts2}. Similarly, for any two preparations $P$ and $P'$ that one realizes in the laboratory, it is never the case that one achieves precisely equality in Eq.~\eqref{opequivpreps2}. In both cases, this is due to the fact that, in practice, one never quite achieves the experimental procedure that one intends to implement.
The problem for an experimental test of noncontextuality, therefore, is that the conditions for applicability of the assumption of noncontextuality (the antecedents in the inferences of Eqs.~\eqref{mnc} and \eqref{pnc})
are, strictly speaking, never satisfied.

\color{black}

\section{Derivation and tightness of the bound in our noncontextuality inequality}
\label{suppB}
\subsection{Derivation of bound}

In the main text, we only provided an argument for why our two applications of the assumption of noncontextuality, Eqs.~(3) and~(5), implied that the quantity $A$ must be bounded away from 1.  Here we show that the explicit value of this bound is $\frac{5}{6}$.

By definition,
\beq
A\equiv \frac{1}{6} \sum_{t\in \{ 1,2,3\}} \sum_{b\in\{0,1\}} p(X=b|M_{t},P_{t,b}).
\eeq
Substituting for $p(X{=}b|M_{t},P_{t,b})$ the expression in terms of the distribution $\mu(\lambda|P_{t,b})$ and the response function $\xi(X=b|M_t,\lambda)$ given in Eq.~(1), we have
\beq
A= \frac{1}{6} \sum_{t\in \{ 1,2,3\}} \sum_{b\in\{0,1\}} \sum_{\lambda\in \Lambda}\xi(X=b|M_t,\lambda)\mu(\lambda|P_{t,b}).
\label{Aexplicit}
\eeq
We now simply note that there is an upper bound on each response function that is independent of the value of $b$, namely,
\beq
\xi(X=b|M_t,\lambda)\le \eta(M_t,\lambda),
\eeq
where
\beq
\eta(M_t,\lambda)\equiv \max_{b'\in \{0,1\}} \xi(X=b'|M_t, \lambda).
\eeq
We therefore have
\beq
A\le \frac{1}{3} \sum_{t\in \{ 1,2,3\}} \sum_{\lambda\in \Lambda}\eta(M_t,\lambda)\left( \frac{1}{2} \sum_{b\in\{0,1\}}  \mu(\lambda|P_{t,b})\right),
\eeq
Recalling that $P_t$ is an equal mixture of $P_{t,0}$ and $P_{t,1}$, so that
\beq
\mu(\lambda|P_{t})= \frac{1}{2} \mu(\lambda|P_{t,0}) + \frac{1}{2} \mu(\lambda|P_{t,1}),
\label{mixmu}
\eeq
we can rewrite the bound as simply
\beq
A\le \frac{1}{3} \sum_{t\in \{ 1,2,3\}}  \sum_{\lambda\in \Lambda} \eta(M_t,\lambda) \mu(\lambda|P_{t}).
\label{Aexplicit2}
\eeq
But recalling Eq.~(5) from the main text,
\beq
\forall \lambda \in \Lambda: \mu(\lambda|P_1)=\mu(\lambda|P_2)=\mu(\lambda|P_3),
\label{NCimpl2supp}
\eeq
we see that the distribution $\mu(\lambda|P_{t})$ is independent of $t$, so we denote it by $\nu(\lambda)$ and rewrite the bound as
\beq
A\le \sum_{\lambda\in \Lambda} \left( \frac{1}{3} \sum_{t\in \{ 1,2,3\}}   \eta(M_t,\lambda) \right)\nu(\lambda).
\eeq
This last step is the first use of noncontextuality in the proof because  Eq.~\eqref{NCimpl2supp} is derived from preparation noncontextuality and the operational equivalence of Eq.~(4).  It then follows that
\beq
A\le \max_{\lambda\in \Lambda} \left( \frac{1}{3} \sum_{t\in \{ 1,2,3\}}   \eta(M_t,\lambda) \right).
\label{Aexplicit3}
\eeq

Therefore, if we can provide a nontrivial upper bound on $\frac{1}{3} \sum_{t}  \eta(M_t,\lambda)$ for an arbitrary ontic state $\lambda$, we obtain a nontrivial upper bound on $A$.
We infer constraints on the possibilities for the triple $(\eta(M_1,\lambda),\eta(M_2,\lambda), \eta(M_3,\lambda))$ from constraints on the possibilities for the triple $(\xi(X\text{=}0|M_1,\lambda),\xi(X\text{=}0|M_2,\lambda), \xi(X\text{=}0|M_3,\lambda))$.

The latter triple
is constrained by Eq.~(7) from the main text, which in the case of $X=0$ reads
\beq
\frac{1}{3}\sum_{t\in \{1,2,3\}} \xi(X\text{=}0|M_t,\lambda) = \frac{1}{2}.
\label{constraintsupp}
\eeq
This is the second use of noncontextuality in our proof, because Eq.~\eqref{constraintsupp} is derived from the operational equivalence of Eq.~(2) and the assumption of measurement noncontextuality.

The fact that the range of each response function is $[0,1]$ implies that the vector $(\xi(X\text{=}0|M_1,\lambda),\allowbreak \xi(X\text{=}0|M_2,\lambda), \xi(X\text{=}0|M_3,\lambda))$ is constrained to the unit cube. The linear constraint \eqref{constraintsupp} implies that these vectors are confined to a two-dimensional plane.  The intersection of the plane and the cube defines the polygon depicted in Fig.~\ref{fcfpolytope}.
\begin{figure}
 \centering
 \includegraphics[scale=0.3]{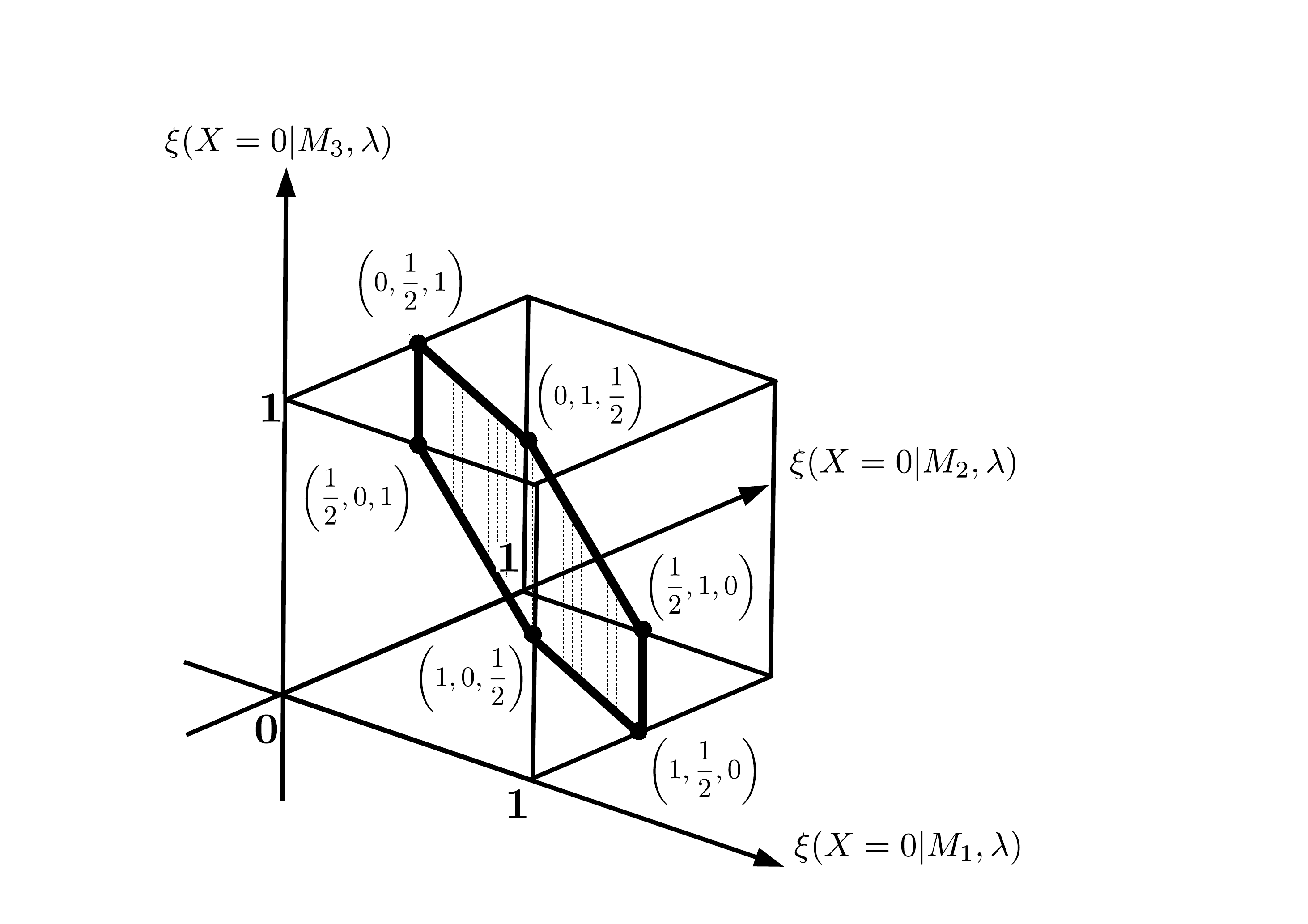}
 \caption{The possible values of $(\xi(X\text{=}0|M_1,\lambda),\xi(X\text{=}0|M_2,\lambda), \xi(X\text{=}0|M_3,\lambda))$.}
\label{fcfpolytope}
\end{figure}
The six vertices of this polygon have coordinates that are a permutation of $(1,\frac{1}{2},0)$.  For every $\lambda$, the vector $(\xi(X\text{=}0|M_1,\lambda),\xi(X\text{=}0|M_2,\lambda), \xi(X\text{=}0|M_3,\lambda))$ corresponds to a point in the convex hull of these extreme points and given that $\frac{1}{3} \sum_{t}  \eta(M_t,\lambda)$  is a convex function of this vector, it suffices to find a bound on the value of this function at the extreme points.  If $\lambda$ is the extreme point $(1,\frac{1}{2},0)$, then we have $(\eta(M_1,\lambda),\eta(M_2,\lambda), \eta(M_3,\lambda))= (1,\frac{1}{2},1)$, and the other extreme points are simply permutations thereof.  It follows that
\beq
\frac{1}{3} \sum_{t}  \eta(M_t,\lambda) \le \frac{5}{6}.
\eeq
Substituting this bound into Eq.~\eqref{Aexplicit3}, we have our result.

\subsection{Tightness of bound: two ontological models}

In this section, we provide an explicit example of a noncontextual ontological model
that saturates our noncontextuality inequality, thus
proving that the noncontextuality inequality is tight, i.e., the upper bound of the
inequality cannot be reduced any further for a noncontextual model.

We also
provide an example of an ontological model that is preparation noncontextual but fails to be measurement noncontextual (i.e. it is measurement {\em contextual}) and that exceeds the bound of our noncontextuality inequality.
This makes it clear that preparation noncontextuality alone does not suffice to justify the precise bound in our inequality, the assumption of measurement noncontextuality is a necessary ingredient as well.   Given that we do not believe preparation noncontextuality on its own to be a reasonable assumption (as discussed in \suppref{A}), we highlight this fact only as a clarification of which features of the experiment are relevant for the particular bound that we obtain.

Note that there is no point inquiring about the bound for models that are measurement noncontextual but preparation contextual because, as shown in Ref.~\cite{spekkens05}, quantum theory admits of models of this type---the ontological model wherein the pure quantum states are the ontic states (the $\psi$-complete ontological model  in the terminology of Ref.~\cite{Harrigan2010}) is of this sort.

For the two ontological models we present, we begin by specifying the ontic state space $\Lambda$.  These are depicted in Figs.~\ref{PNCMNC} and \ref{PNCMC} as pie charts with each slice corresonding to a different element of $\Lambda$.  We specify the six preparations $P_{t,b}$ by the distributions over $\Lambda$ that they correspond to, denoted $\mu(\lambda|P_{t,b})$ (middle left of Figs.~\ref{PNCMNC} and \ref{PNCMC}).  We specify the three measurements $M_t$ by the response functions for the $X=0$ outcome, denoted $\xi(0|M_t, \lambda)$ (top right of Figs.~\ref{PNCMNC} and \ref{PNCMC}).   Finally, we compute the operational probabilities for the various preparation-measurement pairs, using Eq.~(1), and display the results in the $6\times 4$ upper-left-hand corner of Tables~\ref{tablepncmnc} and \ref{tablepncmc}.

\begin{figure}
 \centering
 \includegraphics[scale=0.35]{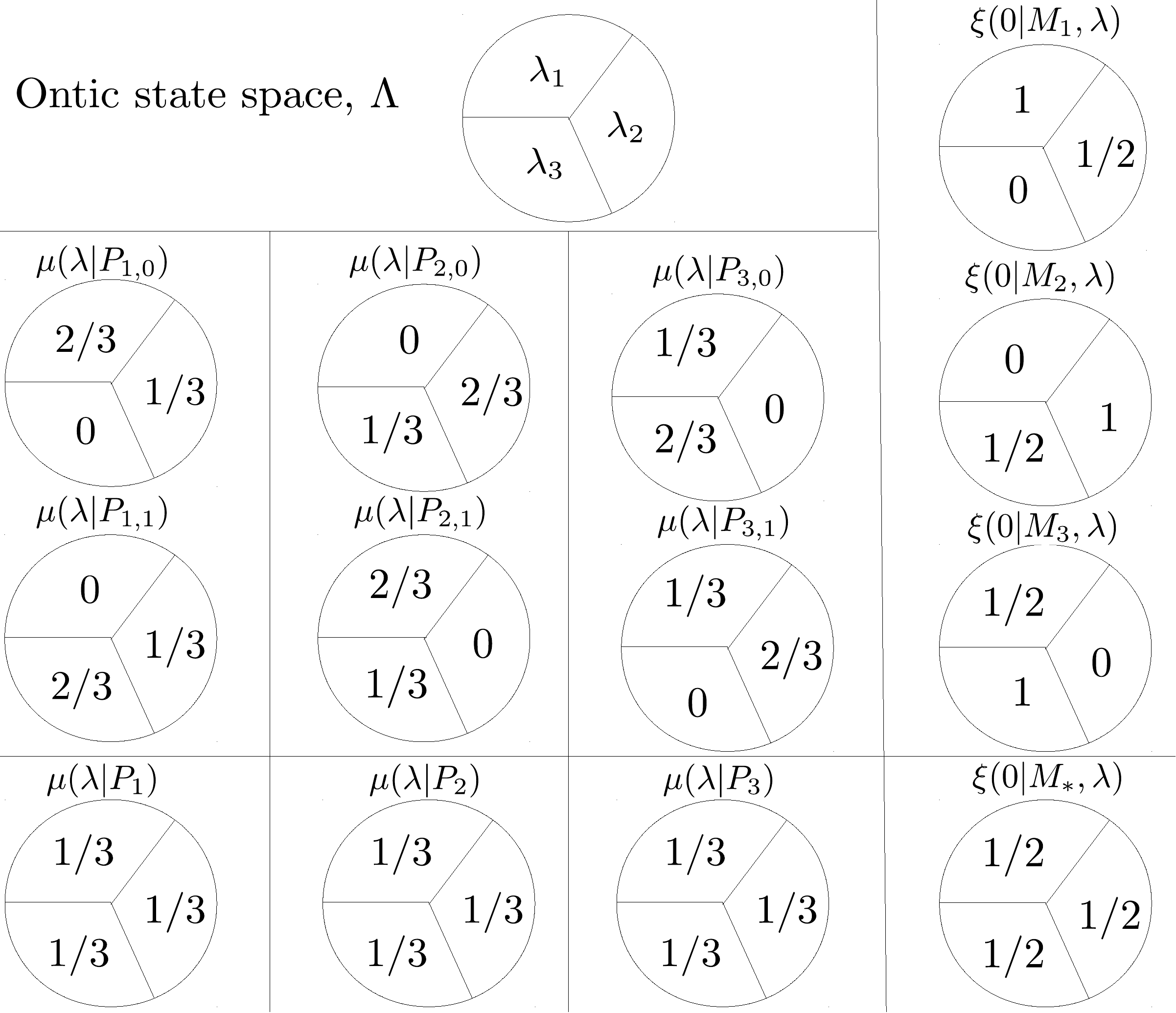}
 \caption{A noncontextual ontological model that saturates the noncontextal bound of our inequality, exhibiting that the bound is tight.}
\label{PNCMNC}
\end{figure}

\begin{table}
\begin{center}
    \begin{tabular}{ | c || c | c | c !{\vrule width 1pt} c | c |}
    \hline
    & $[0|M_1]$ & $[0|M_2]$ & $[0|M_3]$ & $[0|M_*]$ \\ \hline\hline%
    $P_{1,0}$ & \cellcolor{gray} $5/6$ & $1/3$ & $1/3$ & $1/2$ \\ \hline
    $P_{1,1}$ & \cellcolor{gray} $1/6$ & $2/3$ & $2/3$ & $1/2$ \\ \hline

    $P_{2,0}$ & $1/3$ & \cellcolor{gray} $5/6$ & $1/3$ & $1/2$ \\ \hline
    $P_{2,1}$ & $2/3$ & \cellcolor{gray} $1/6$ & $2/3$ & $1/2$ \\ \hline

    $P_{3,0}$ & $1/3$ & $1/3$ & \cellcolor{gray} $5/6$ & $1/2$ \\ \hline
    $P_{3,1}$ & $2/3$ & $2/3$ & \cellcolor{gray} $1/6$ & $1/2$ \\ \hlinewd{1pt}
    $P_1$     & $1/2$ & $1/2$ & $1/2$ & $1/2$ \\ \hline
    $P_2$     & $1/2$ & $1/2$ & $1/2$ & $1/2$ \\ \hline
    $P_3$     & $1/2$ & $1/2$ & $1/2$ & $1/2$ \\ \hline
    \end{tabular}
\end{center}
\caption{Operational statistics from the noncontextual ontological model of Fig.~\ref{PNCMNC}, achieving $A=5/6$. The shaded
cells correspond to the ones relevant for calculating $A$.}
\label{tablepncmnc}
\end{table}

\begin{figure}
 \centering
 \includegraphics[scale=0.35]{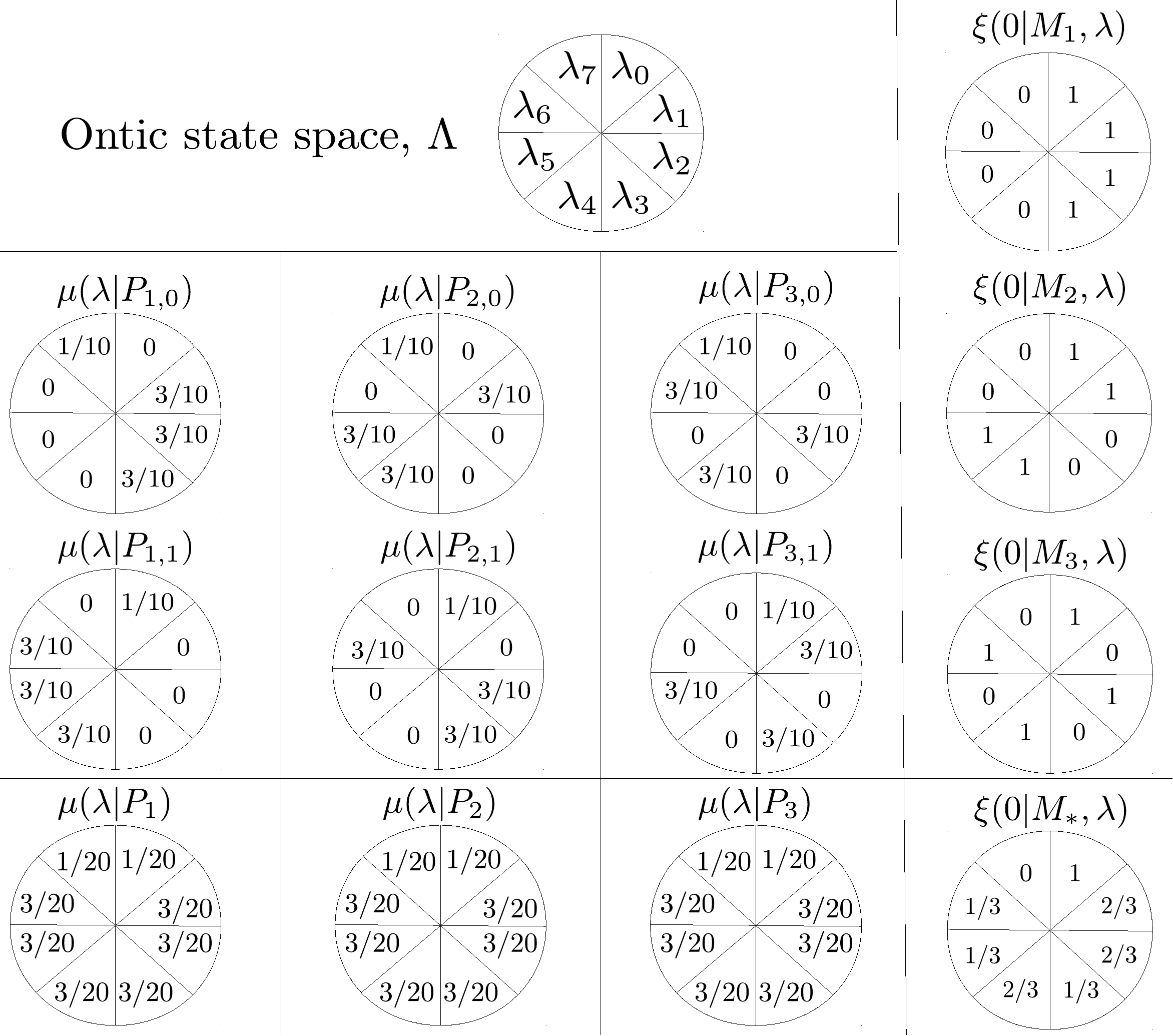}
 \caption{An ontological model that is preparation noncontextual but measurement contextual and that violates our inequality.}
\label{PNCMC}
\end{figure}

\begin{table}
\begin{center}
    \begin{tabular}{ | c || c | c | c !{\vrule width 1pt} c | c |}
    \hline
     & $[0|M_1]$ & $[0|M_2]$ & $[0|M_3]$ & $[0|M_*]$ \\ \hline\hline%
    $P_{1,0}$ & \cellcolor{gray}$9/10$ & $3/10$ & $3/10$ & $1/2$ \\ \hline
    $P_{1,1}$ & \cellcolor{gray}$1/10$ & $7/10$ & $7/10$ & $1/2$ \\ \hline
    $P_{2,0}$ & $3/10$ & \cellcolor{gray}$9/10$ & $3/10$ & $1/2$ \\ \hline
    $P_{2,1}$ & $7/10$ & \cellcolor{gray}$1/10$ & $7/10$ & $1/2$ \\ \hline
    $P_{3,0}$ & $3/10$ & $3/10$ & \cellcolor{gray}$9/10$ & $1/2$ \\ \hline
    $P_{3,1}$ & $7/10$ & $7/10$ & \cellcolor{gray}$1/10$ & $1/2$ \\ \hlinewd{1pt}
    $P_1$     & $1/2$ & $1/2$ & $1/2$ & $1/2$ \\ \hline
    $P_2$     & $1/2$ & $1/2$ & $1/2$ & $1/2$ \\ \hline
    $P_3$     & $1/2$ & $1/2$ & $1/2$ & $1/2$ \\ \hline
    \end{tabular}
\end{center}
\caption{Operational statistics from the preparation noncontextual and measurement contextual ontological model of Fig.~\ref{PNCMC}, achieving $A=9/10$. The shaded cells
correspond to the ones relevant for calculating $A$.}
\label{tablepncmc}
\end{table}

In the remainder of each table, we display the operational probabilities for the effective preparations, $P_t$, which are computed from the operational probabilities for the $P_{t,b}$ and the fact that $P_t$ is the uniform mixture of $P_{t,0}$ and $P_{t,1}$.  We also display the operational probabilities for the effective measurement $M_*$, which is computed from the operational probabilities for the $M_t$ and the fact that $M_*$ is a uniform mixture of $M_1$, $M_2$ and $M_3$.

From the tables, we can verify that our two ontological models imply the operational equivalences that we use in the derivation of our noncontextuality inequality.  Specifically, the three preparations $P_1$, $P_2$ and $P_3$ yield exactly the same statistics for all of the measurements, and the measurement $M_*$ is indistinguishable from a fair coin flip for all the preparations.

Figs.~\ref{PNCMNC} and \ref{PNCMC} also depict $\mu(\lambda|P_t)$ for $t\in\{1,2,3\}$ for each model (bottom left).  These are determined from the $\mu(\lambda|P_{t,b})$ via Eq.~\eqref{mixmu}.  Similarly, the response function $\xi(0|M_*, \lambda)$, which is determined from $\xi(X=b|M_*,\lambda) = \frac{1}{3}\sum_{t\in \{1,2,3\}} \xi(X=b|M_t,\lambda)$, is displayed in each case (bottom right).

Given the operational equivalence of $P_1$, $P_2$ and $P_3$, an ontological model is preparation noncontextual if and only if $\mu(\lambda|P_1)=\mu(\lambda|P_2)=\mu(\lambda|P_3)$ for all $\lambda\in \Lambda$. We see, therefore, that both models are preparation noncontextual.

Similarly given the operational equivalence of $M_*$ and a fair coin flip, an ontological model is measurement noncontextual if and only if $\xi(0|M_*, \lambda)=\frac{1}{2}$ for all $\lambda\in \Lambda$.  We see, therefore, that only the first model is measurement noncontextual.

Note that in the second model, $M_*$ manages to be operationally equivalent to a fair coin flip, despite the fact that when one conditions on a given ontic state $\lambda$, it does not have a uniformly random response.  This is possible only because the set of distributions is restricted in scope, and the overlaps of these distributions with the response functions always generates the uniformly random outcome.  This highlights how an ontological model can do justice to the operational probabilities while failing to be noncontextual.

Finally, using the operational probabilities in the tables, one can compute the value of $A$ for each model.  It is determined entirely by the operational probabilities in the shaded cells. One thereby confirms that $A=\frac{5}{6}$ in the first model, while $A=\frac{9}{10}$ in the second model.

\section{Constructing the secondary procedures from the primary ones}
\label{suppC}

\subsection{Secondary preparations in quantum theory}

As noted in the main text, it is easiest to describe the details of our procedure for defining secondary preparations if we make the assumption that quantum theory correctly describes the experiment.  Further on, we will describe the procedure for a generalised probabilistic theory (GPT).

Fig. 1 in the main text described how to define the secondary preparations if the primary preparations deviate from the ideal only {\em within} the $\hat{x}-\hat{z}$ plane of the Bloch sphere.  Here, we consider the case where the six primary preparations deviate from the ideals within the bulk of the Bloch sphere.
The fact that our proof only requires that the secondary preparations satisfy Eq.~(10) means that the different pairs, $P^{\rm s}_{t,0}$ and $P^{\rm s}_{t,1}$ for $t\in \{1,2,3\}$, need not all mix to the center of the Bloch sphere, but only to the {\em same} state.  It follows that the three pairs need not be coplanar in the Bloch sphere.  Note, however, for any {\em two} values, $t$ and $t'$, the four preparations $P^{\rm s}_{t,0}, P^{\rm s}_{t,1}, P^{\rm s}_{t',0}, P^{\rm s}_{t,1}$ do need to be coplanar.

Any mixing procedure defines a  map from each of the primary preparations $P^{\rm p}_{t,b}$ to the corresponding secondary preparation $P^{\rm s}_{t,b}$, which can be visualized as a motion of the corresponding point within the Bloch sphere.  To ensure that the six secondary preparations approximate well the ideal preparations while also defining mixed preparations $P^{\rm s}_1$, $P^{\rm s}_2$ and $P^{\rm s}_3$ that satisfy the appropriate operational equivalences, the mixing procedure must allow for motion in the $\pm\hat{y}$ direction.  Consider what happens if one tries to achieve such motion {\em without} supplementing the primary set with the eigenstates of $\vec{\sigma}\cdot \hat{y}$.  A given point that is biased towards $-\hat{y}$ can be moved in the $+\hat{y}$ direction by mixing it with another point that has less bias in the $-\hat{y}$ direction.  However, because the primary preparations are widely separated within the $\hat{x}-\hat{z}$ plane, achieving a small motion in $+\hat{y}$ direction in
this fashion comes at the price of a large motion within the $\hat{x}-\hat{z}$ plane, implying a significant motion away from the ideal.  This problem is particularly pronounced if the primary points are very close to coplanar.

The best way to move a given point in the $\pm\hat{y}$ direction is to mix it with a point that is at roughly the same location within the $\hat{x}-\hat{z}$ plane, but displaced in the $\pm\hat{y}$ direction. This scheme, however, would require supplementing the primary set with one or two additional preparations for every one of its elements.    Supplementing the original set with just the two  eigenstates of $\vec{\sigma}\cdot \hat{y}$ constitutes a good compromise between keeping the number of preparations low and ensuring that the secondary preparations are close to the ideal. Because the $\vec{\sigma}\cdot \hat{y}$ eigenstates have the greatest possible distance from the $\hat{x}-\hat{z}$ plane, they can be used to move any point close to that plane in the $\pm \hat{y}$ direction while generating only a modest motion within the $\hat{x}-\hat{z}$ plane.

\subsection{Secondary measurements in quantum theory}

Just as with the case of preparations, we solve the problem of no strict statistical equivalences for measurements by noting that from the primary set of measurements, $M^{\rm p}_1$, $M^{\rm p}_2$ and $M^{\rm p}_3$, one can infer the statistics of a large family of measurements, and one can find three measurements within this family, called the secondary measurements and denoted $M^{\rm s}_1$, $M^{\rm s}_2$ and $M^{\rm s}_3$, such that their mixture, $M^{\rm s}_*$, satisfies the operational equivalence of Eq.~(2) {\em exactly}.  To give the details of our approach, it is again useful to begin with the quantum description.

A geometric visualization of the construction is also possible in this case.
Just as a density operator can be written $\rho = \frac12(\idn + \vec r \cdot \vec\sigma)$ to define a three-dimensional Bloch vector $\vec r$, an effect can be written $E = \frac12(e_0\idn + \vec e\cdot\vec\sigma)$ to define a four-dimensional Bloch-like vector $(e_0, \vec e)$, whose four components we will call the $\hat \idn$, $\hat x$, $\hat y$ and $\hat z$ components.  Note that $e_0={\rm tr}(E)$, while $e_x = {\rm tr}(\vec{\sigma}\cdot \hat{x} E)$ and so forth.  The eigenvalues of $E$ are expressed in terms of these components as $\frac{1}{2}(e_o \pm |\vec{e}|)$.  Consequently, the constraint that $0 \le E \le \idn$ takes the form of three inequalities $0 \le e_o \le 2$, $|\vec{e}| \le e_0$ and $|\vec{e}| \le 2-e_0$.  This corresponds to the intersection of two cones.  For the case $e_y=0$, the Bloch representation of the effect space is three-dimensional and is displayed in Fig.~\ref{Blochcone}.
When portraying binary-outcome measurements associated to a POVM $\{ E, \idn - E\}$ in this representation, it is sufficient to portray the Bloch-like vector $(e_0,\vec{e})$ for outcome $E$ alone, given that the vector for $\idn - E$ is simply $(2-e_0, -\vec{e})$.  Similarly, to describe any mixture of two such POVMs, it is sufficient to describe the mixture of the effects corresponding to the first outcome.
The family of measurements that is defined in terms of the primary set is slightly different than what we had for preparations.  The reason is that each primary measurement on its own generates a family of measurements by probabilistic post-processing of its outcome. If we denote the outcome of the original measurement by $X$ and that of the processed measurement by $X'$, then the probabilistic processing is a conditional probability $p(X'|X)$.  It is sufficient to determine the convexly-extremal post-processings, since all others can be obtained from these by mixing.  For the case of binary outcome measurements considered here, there are just four extremal post-processings: the identity process, $p(X'|X)=\delta_{X',X}$; the process that flips the outcome, $p(X'|X)=\delta_{X',X\oplus 1}$; the process that always generates the outcome $X'=0$, $p(X'|X)=\delta_{X',0}$; and the process that always generates the outcome $X'=1$, $p(X'|X)=\delta_{X',1}$.  Applying these to our three primary measurements, we obtain
eight measurements in all: the two that generate a fixed outcome, the three originals, and the three originals with the outcome flipped.   If the set of primary measurements corresponded to the ideal set, then the eight extremal post-processings would correspond to the observables $0, \idn, \vec{\sigma}\cdot \hat{n_1}, {-}\vec{\sigma}\cdot \hat{n_1}, \vec{\sigma}\cdot \hat{n_2}, {-}\vec{\sigma}\cdot \hat{n_2}, \vec{\sigma}\cdot \hat{n_3}, {-}\vec{\sigma}\cdot \hat{n_3}$.  In practice, the last six measurements will be unsharp.  These eight measurements can then be mixed probabilistically to define the family of measurements from which the secondary measurements must be chosen.  We refer to this family as the {\em convex hull of the post-processings} of the primary set.

We will again start with a simplified example, wherein the primary measurements have Bloch-like vectors with vanishing component along $\hat{y}$, $e_y = 0$, and unit component along $\idn$, $e_0=1$, so that $E = \frac12(\idn + e_x \vec\sigma\cdot \hat{x} + e_z \vec\sigma \cdot \hat{z})$.  In this case, the constraint $0\le E\le \idn$ reduces to $|\vec{e}|\le 1$, which is the same constraint that applies to density operators confined to the $\hat{x}-\hat{z}$ plane of the Bloch sphere.  Here, the only deviation from the ideal is within this plane,
and the construction is precisely analogous to what is depicted in Fig.~1 of the main text.

Unlike the case of preparations, however, the primary measurements can deviate from the ideal in the $\hat \idn$ direction, that is, $E$ may have a component along $\idn$ that deviates from $1$, which corresponds to introducing a state-independent bias on the outcome of the measurement.  This is where the extremal post-processings yielding the constant-outcome measurements corresponding to the observables 0 and $\idn$ come in.  They allow one to move in the $\pm\hat \idn$ direction.

Fig.~\ref{Blochcone} presents an example wherein the primary measurements have Bloch-like vectors that deviate from the ideal not only within the $\hat{x}-\hat{z}$ plane, but in the $\hat{\idn}$ direction as well (it is still presumed, however, that all components in the $\hat{y}$ direction are vanishing).

\begin{figure}
 \centering
 \includegraphics[width=0.48\textwidth]{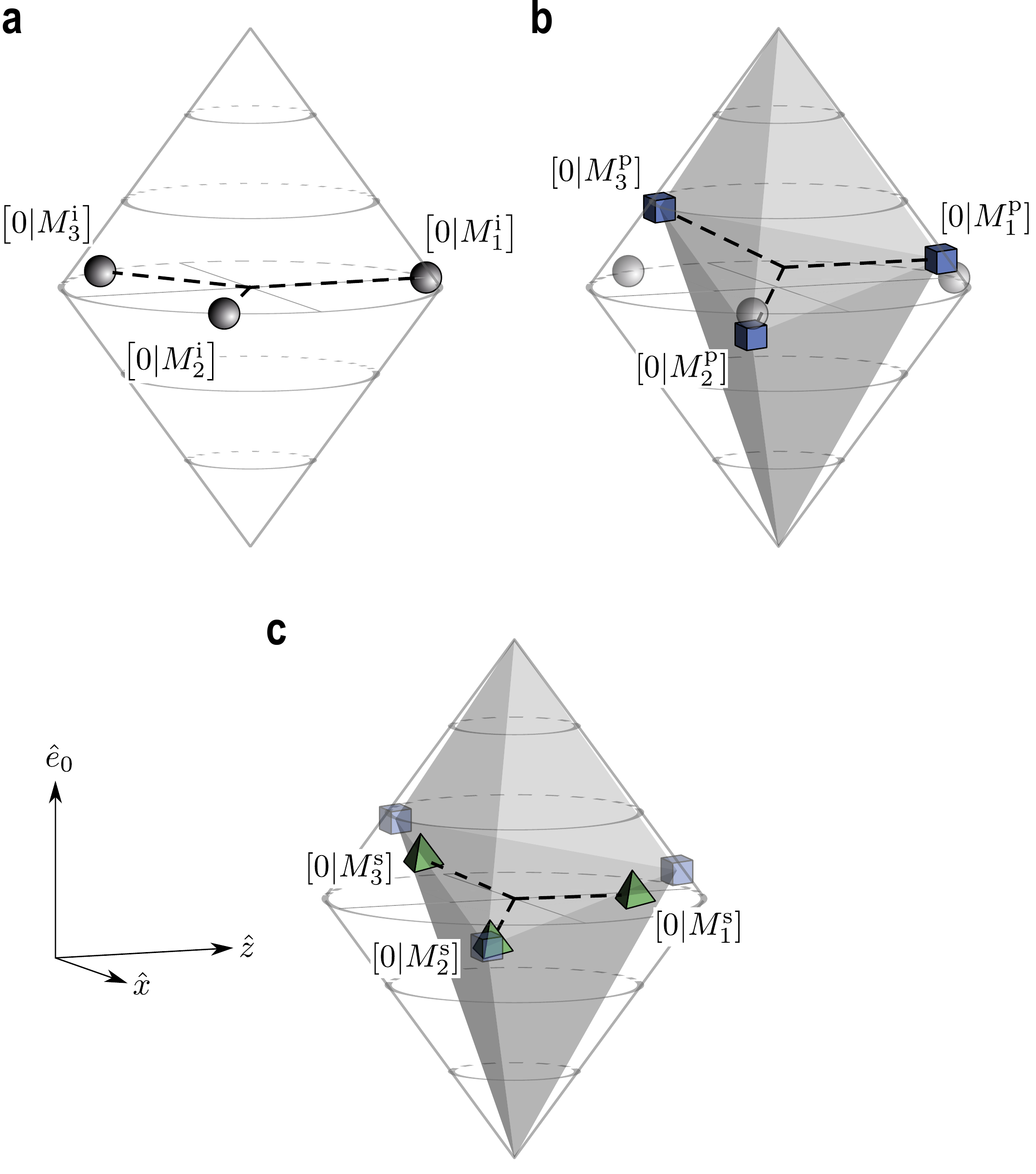}
 \caption{A depiction of the construction of secondary measurements from primary ones in the simplified case where the component along $\hat{y}$ is zero.  For each measurement, we specify the point corresponding to the Bloch representation of its first outcome.  These are labelled $[0|M_1]$, $[0|M_2]$ and $[0|M_3]$.  The equal mixture of these three, labelled $[0|M_*]$, is the centroid of these three points, i.e. the point equidistant from all three. \textbf{a}, The ideal measurements $[0|M^{\rm i}_{t}]$ with centroid at $\idn/2$, illustrating that the operational equivalence~(2) is satisfied exactly. \textbf{b}, Errors in the experiment (exaggerated) will imply that the realized measurements $[0|M^{\rm p}_{t,}]$ (termed primary) will deviate from the ideal, and their centroid deviates from $\idn/2$.  The family of points corresponding to probabilistic mixtures of the $[0|M^{\rm p}_{t}]$ and the
 observables $0$ and $\idn$
  are depicted by the grey region.  (For clarity, we have not depicted the outcome-flipped versions of the three primary measurements, and have not included them in the probabilistic mixtures. As we note in the text, such a restriction still allows for a good construction.)
\textbf{c}, The secondary measurements $M^{\rm s}_{t}$ that have been chosen from this grey region.  They are chosen such that their centroid is at $\idn/2$, restoring the operational equivalence~(2).}
\label{Blochcone}
\end{figure}

In practice, of course, the $\hat{y}$ component of our measurements never vanishes precisely either.  We therefore apply the same trick as we did for the preparations.
We supplement the set of primary measurements with an additional measurement, denoted $M^{\rm p}_4$, that ideally corresponds to the observable $\vec{\sigma}\cdot \hat{y}$.  The post-processing which flips the outcome then corresponds to the observable $-\vec{\sigma}\cdot \hat{y}$.  Mixing the primary measurements with $M^{\rm p}_4$ and its outcome-flipped counterpart
allows motion in the $\pm\hat{y}$ direction within the Bloch cone.

Note that the capacity to move in both the $\hat{y}$ and the $-\hat{y}$ direction is critical for achieving the operational equivalence of Eq.~(2), because if the secondary measurements had a common bias in the $\hat{y}$ direction, they could not mix to the POVM $\{ \idn/2, \idn/2\}$ as Eq.~(9) requires.  For the preparations, by contrast, supplementing the primary set by just {\em one} of the eigenstates of $\vec{\sigma}\cdot \hat{y}$ would still work, given that the mixed preparations $P^{\rm s}_{t}$ do not need to coincide with the completely mixed state $\idn/2$.

The secondary measurements $M^{\rm s}_1$, $M^{\rm s}_2$ and $M^{\rm s}_3$ are then chosen from the convex hull of the post-processings of the $ M^{\rm p}_1,M^{\rm p}_2,M^{\rm p}_3,M^{\rm p}_4$.   Without this supplementation, it may be impossible to find secondary measurements that define an $M^{\rm s}_*$ that satisfies the operational equivalences while providing a good approximation to the ideal measurements.

In all, under the extremal post-processings of the supplemented set of primary measurements, we obtain ten points which ideally correspond to the observables $0, \idn, \vec{\sigma}\cdot \hat{n_1}, {-}\vec{\sigma}\cdot \hat{n_1}, \vec{\sigma}\cdot \hat{n_2}, {-}\vec{\sigma}\cdot \hat{n_2}, \vec{\sigma}\cdot \hat{n_3}, {-}\vec{\sigma}\cdot \hat{n_3}, \vec{\sigma}\cdot \hat{y}$, and  ${-}\vec{\sigma}\cdot \hat{y}$.

Note that the outcome-flipped versions of the three primary measurements are not critical for defining a good set of secondary measurements, and indeed we find that we can dispense with them and still obtain good results.  This is illustrated in the example of Fig.~\ref{Blochcone}.

\subsection{Secondary preparations and measurements in generalised probabilistic theories}

We do not want to presuppose that our experiment is well fit by a quantum description. Therefore instead of working with density operators and POVMs, we work with GPT states and effects, which are inferred from  the matrix $D^{\rm p}$
\begin{equation}
D^{\rm p} =
 \begin{pmatrix}
  p^{1}_{1,0}  & p^{1}_{1,1} & \cdots & p^{1}_{4,0}  & p^{1}_{4,1} \\
  p^{2}_{1,0}  & p^{2}_{1,1} & \cdots & p^{2}_{4,0}  & p^{2}_{4,1} \\
  p^{3}_{1,0}  & p^{3}_{1,1} & \cdots & p^{3}_{4,0}  & p^{3}_{4,1} \\
  p^{4}_{1,0}  & p^{4}_{1,1} & \cdots & p^{4}_{4,0}  & p^{4}_{4,1}
 \end{pmatrix}.
 \label{prim_matrix}
\end{equation}
where
\beq
p^{t'}_{t,b}\equiv p(0|M^{\rm p}_{t'},P^{\rm p}_{t,b})
\eeq
is the probability of obtaining outcome $0$ in the $t'$th measurement that was actually realized in the experiment (recall that we term this measurement primary and denote it by $M^{\rm p}_{t'}$), when it follows the $(t,b)$th preparation that was actually realized in the experiment (recall that we term this preparation primary and denote it by $P^{\rm p}_{t,b}$).
These probabilities are estimated by fitting the raw  experimental data (which are merely finite samples of the true probabilities) to a GPT; we postpone the description of this procedure to Sec.~\ref{fittingrawdata}.

The rows of the $D^{\rm p}$ matrix define the GPT effects. We denote the vector defined by the $t$th row,  which is associated to the measurement event $[0|M_t^{\rm p}]$ (obtaining the 0 outcome in the primary measurement $M_t^{\rm p}$), by $\mathbf{M}_t^{\rm p}$.
Similarly, the columns of this matrix define the GPT states.  We denote the vector associated to the $(t,b)$th column, which is associated to the primary preparation $P^{\rm p}_{t,b}$, by $\mathbf{P}^{\rm p}_{t,b}$.

As described in the main text, we define the {\em secondary} preparation $P^{\rm s}_{t,b}$ by a probabilistic mixture of the primary preparations.  Thus, the GPT state of the secondary preparation is a vector $\mathbf{P}^{\rm s}_{t,b}$ that is a probabilistic mixture of the $\mathbf{P}^{\rm p}_{t,b}$,
\beq
\mathbf{P}^{\rm s}_{t,b} = \sum_{t'=1}^4\sum_{b'=0}^1 u_{t',b'}^{t,b} \mathbf{P}_{t',b'}^{\rm p},
\eeq
where the $u_{t',b'}^{t,b}$ are the weights in the mixture.

A secondary measurement $M_{t'}^{\rm s}$ is obtained from the primary measurements in a similar fashion, but in addition to probabilistic mixtures, one must allow certain post-processings of the measurements,  in analogy to the quantum case described above.
The set of all post-processings of the primary outcome-0 measurement events has extremal elements consisting of the outcome-0 measurement events themselves together with:
the measurement event that {\em always} occurs (i.e. obtaining outcome `0' or `1'), which is represented by the vector of probabilities where every entry is 1, denoted $\mathbf{1}$; the measurement event that {\em never} occurs (i.e. obtaining neither outcome `0' nor outcome `1'), which is represented by the vector of probabilities where every entry is 0, denoted $\mathbf{0}$; and the outcome-1 measurement events, $[1|M_t^{\rm p}]$,
which is represented by the vector $\mathbf{1}-\mathbf{M}_t^{\rm p}$.

We can therefore define our three secondary outcome-0 measurement events as probabilistic mixtures of the four primary ones as well as the extremal post-processings mentioned above, that is
\beq
\mathbf{M}_t^{\rm s} = \sum_{t'=1}^{4} v_{t'}^t\mathbf{M}_{t'}^{\rm p} + v_{\mathbf{0}}^t \mathbf{0} + v_{\mathbf{1}}^t\mathbf{1} + \sum_{t''=1}^{4} v_{\lnot t''}^t(\mathbf{1}-\mathbf{M}_{t''}^{\rm p}),
\eeq
where for each $t$, the vector of weights in the mixture is $(v_1^t,v_2^t,v_3^t,v_4^t,v_{\mathbf{0}}^t,v_{\mathbf{1}}^t,v_{\lnot 1}^t,v_{\lnot 2}^t,v_{\lnot 3}^t,v_{\lnot 4}^t)$. We see that this is a particular type of linear transformation on the rows.

Again, as mentioned in the discussion of the quantum case, we can in fact limit the post-processing to exclude the
outcome-1 measurement events for
$M_1$, $M_2$ and $M_3$, keeping only the outcome-1 event for $M_4$, and still obtain good results.  Thus we found it sufficient to search for secondary outcome-0 measurement events among those of the form
\beq
\mathbf{M}_t^{\rm s} = \sum_{t'=1}^{4} v_{t'}^t\mathbf{M}_{t'}^{\rm p} + v_{\mathbf{0}}^t \mathbf{0} + v_{\mathbf{1}}^t\mathbf{1} + v_{\lnot 4}^t(\mathbf{1}-\mathbf{M}_{4}^{\rm p}),
\eeq
where for each $t$, the vector of weights in the mixture is $(v_1^t,v_2^t,v_3^t,v_4^t,v_{\mathbf{0}}^t,v_{\mathbf{1}}^t,v_{\lnot 4}^t)$.

\color{black}
Returning to the preparations, we choose the weights $u^{t,b}_{t',b'}$  to maximize the function
\beq
C_{\rm P} \equiv \frac{1}{6}\sum_{t=1}^3\sum_{b=0}^1 u_{t,b}^{t,b}
\eeq
subject to the linear constraint
\beq
\frac{1}{2}\sum_b \mathbf{P}^{\rm s}_{1,b}=\frac{1}{2}\sum_b \mathbf{P}^{\rm s}_{2,b}=\frac{1}{2}\sum_b \mathbf{P}^{\rm s}_{3,b},
\eeq
as noted in the main text.  This optimization ensures that the secondary preparations are as close as possible to the primary ones while ensuring that they satisfy the relevant operational equivalence {\em exactly}.  Table~\ref{tb:prep_weights} reports the weights $u_
{t',b'}^{t,b}$ that were obtained from this optimization procedure, averaged over the 100 runs of the experiment.  As noted in the main text, these weights yield $C_{\rm P}=0.9969\pm0.0001$, indicating that the secondary preparations are indeed very close to the primary ones.

The scheme for finding the weights $(v_1^t,v_2^t,v_3^t,v_4^t,v_{\mathbf{0}}^t,v_{\mathbf{1}}^t,v_{\lnot 4}^t)$ that define the secondary measurements is analogous.
Using a linear program, we find the vector of such weights that maximizes the function
\beq
C_{\rm M}\equiv \frac{1}{3}\sum_{t=1}^{3}v_t^t,
\eeq
 subject to the constraint that
 \beq
 \mathbf{M}_*^{\rm s}=\frac{1}{2}\mathbf{1},
 \eeq
 where $\mathbf{M}_*^{\rm s} \equiv \frac{1}{3}\sum_{t=1}^{3}\mathbf{M}_t^{\rm s}$.
 A high value of $C_\mathrm{M}$ signals that each of the three secondary measurements is close to the corresponding primary one. Table~\ref{tb:meas_weights} reports the weights we obtain from this optimization procedure, averaged over the 100 runs of the experiment. These weights yield $C_{\rm M} = 0.9976\pm0.0001$, again indicating the closeness of the secondary measurements to the primary ones.

This optimization defines the precise linear transformation of the rows of $D^{\rm p}$ and the linear transformation of the columns  of $D^{\rm p}$ that serve to define the secondary preparations and measurements.  By combining the operations on the rows and on the columns, we obtain from $D^{\rm p}$ a $3\times 6$ matrix, denoted $D^{\rm s}$, whose entries $s^{t'}_{t,b}$ are
\begin{equation}
\sum_{\tau=1}^{4} \sum_{\beta=0}^{1}  u^{t,b}_{\tau,\beta} \left[ \sum_{\tau'=1}^{4} v^{t'}_{\tau'} p^{\tau'}_{\tau,\beta} + v^{t'}_{\mathbf{0}} 0 + v^{t'}_{\mathbf{1}} 1 +  v^{t'}_{\lnot4} (1-p^{4}_{\tau,\beta})\right]
\end{equation}
where $t',t\in \{1,2,3\}$, $b\in \{0,1\}$.
This matrix describes the secondary preparations $P_{t,b}^{\rm s}$ and measurements $M_{t'}^{\rm s}$.
The component $s^{t'}_{t,b}$ of this matrix describes the probability of obtaining outcome 0 in measurement $M^{\rm s}_{t'}$ on preparation $P^{\rm s}_{t,b}$, that is,
\beq
s^{t'}_{t,b} \equiv p(0|M^{\rm s}_{t'},P^{\rm s}_{t,b}).
\eeq
These probabilities are the ones that are used to calculate the value of $A$ via Eq.~(6) of the main text.

\begin{table*}
\centering
\begin{tabular}{|r||rrrrrrrr|}%
\hline
&\multicolumn{1}{c}{${P}_{1,0}^{\rm p}$} & \multicolumn{1}{c}{${P}_{1,1}^{\rm p}$} & \multicolumn{1}{c}{${P}_{2,0}^{\rm p}$} & \multicolumn{1}{c}{${P}_{2,1}^{\rm p}$} & \multicolumn{1}{c}{${P}_{3,0}^{\rm p}$} & \multicolumn{1}{c}{${P}_{3,1}^{\rm p}$} & \multicolumn{1}{c}{${P}_{4,0}^{\rm p}$} & \multicolumn{1}{c|}{${P}_{4,1}^{\rm p}$} \\
\hline\hline
$P^{\rm s}_{1,0}$ & \cellcolor{gray}0.99483 & 0.00023 & 0.00029 & 0.00092 & 0.00016 & 0.00031 & 0.00324 & 0.00003 \\
$P^{\rm s}_{1,1}$ & 0.00002 & \cellcolor{gray}0.99791 & 0.00014 & 0.00026 & 0.00006 & 0.00005 & 0.00154 & 0.00002 \\
$P^{\rm s}_{2,0}$ & 0.00065 & 0.00008 & \cellcolor{gray}0.99684 & 0.00003 & 0.00001 & 0.00029 & 0.00002 & 0.00208 \\
$P^{\rm s}_{2,1}$ & 0.00134 & 0.00015 & 0.00009 & \cellcolor{gray}0.99482 & 0.00008 & 0.00028 & 0.00000 & 0.00323 \\
$P^{\rm s}_{3,0}$ & 0.00008 & 0.00023 & 0.00011 & 0.00000 & \cellcolor{gray}0.99883 & 0.00004 & 0.00044 & 0.00027 \\
$P^{\rm s}_{3,1}$ & 0.00011 & 0.00023 & 0.00022 & 0.00016 & 0.00016 & \cellcolor{gray}0.99803 & 0.00050 & 0.00061 \\
\hline
\end{tabular}
\caption{
Each of the six secondary preparation procedures, denoted $P^{\rm s}_{t,b}$ where $ t\in \{1,2,3\}, b\in \{0,1\}$ (the rows), is a probabilistic mixture of the eight primary preparation procedures, denoted  $P^{\rm p}_{t',b'}$ where $ t'\in \{1,2,3,4\}, b'\in \{0,1\}$ (the columns).  The table presents the weights appearing in each such mixture, denoted $u^{t,b}_{t',b'}$ in the main text.  These are determined numerically by maximizing the function  $C_{\rm P}=\frac{1}{6}\sum_{t=1}^3\sum_{b=0}^1 u_{t,b}^{t,b}$ (the average of the weights appearing in the shaded cells), which quantifies the closeness of the secondary procedures to the primary ones, subject to the constraint of operational equivalence of the uniform mixtures of $P_{t,0}^{\rm s}$ and $P_{t,1}^{\rm s}$ for $t\in \{1,2,3\}$.
The values presented are averages over 100 runs.
}
\label{tb:prep_weights}
\end{table*}

\begin{table*}
\centering
\begin{tabular}{|r||rrrrrrr|}
\hline
 & \multicolumn{1}{c}{$[0|{M}_1^{\rm p}]$}    & \multicolumn{1}{c}{$[0|{M}_2^{\rm p}]$}  &  \multicolumn{1}{c}{$[0|{M}_3^{\rm p}]$} &  \multicolumn{1}{c}{$[0|{M}_4^{\rm p}]$} & \multicolumn{1}{c}{${[1|M}_{4}^{\rm p}]$}  & \multicolumn{1}{c}{$1$} & \multicolumn{1}{c|}{$0$}  \\
 \hline \hline
$[0|M_1^{\rm s}]$ &\cellcolor{gray} 0.99707 & 0.00004 & 0.00015 & 0.00010 & 0.00208 & 0.00031 & 0.00025 \\
$[0|M_2^{\rm s}]$ & 0.00007 & \cellcolor{gray}0.99727 & 0.00012 & 0.00004 & 0.00199 & 0.00028 & 0.00023 \\
$[0|M_3^{\rm s}]$ & 0.00004 & 0.00002 & \cellcolor{gray}0.99845 & 0.00001 & 0.00117 & 0.00019 & 0.00012 \\
\hline
\end{tabular}
\caption{
Each of the three secondary outcome-0 measurement events, denoted $[0|M^{\rm s}_{t}]$ where $ t\in \{1,2,3\}$ (the rows), is a probabilistic mixture of the four primary outcome-0 measurement events, denoted  $[0|M^{\rm p}_{t'}]$ where $ t'\in \{1,2,3,4\}$, and three processings thereof, denoted $[1|{M}_{4}^{\rm p}]$,  $1$, and $0$  (the seven columns).  The table presents the weights appearing in each such mixture.
These are determined numerically by maximizing the function  $C_{\rm M}=\frac{1}{3}\sum_{t=1}^3 v_{t}^{t}$  (the average of the weights appearing in the shaded cells), which quantifies the closeness of the secondary procedures to the primary ones, subject to the constraint of operational equivalence between the uniform mixture of $M_1^{\rm s}$, $M_2^{\rm s}$ and $M_3^{\rm s}$ and a fair coin flip.  The values presented are averages over 100 runs. }
\label{tb:meas_weights}
\end{table*}

\section{Data analysis}
\label{suppD}

\subsection{Fitting the raw data to a generalised probabilistic theory}
\label{fittingrawdata}

In our experiment we perform four measurements on each of eight input states. If we define $r^{t'}_{t,b}$ as the fraction of `0' outcomes returned by measurement $M_{t'}$ on preparation $P_{t,b}$, the results can be summarized in a $4\times8$ matrix of raw data, $D^{\rm r}$, defined as:
\begin{equation}
D^{\rm r} =
 \begin{pmatrix}
  r^{1}_{1,0}  & r^{1}_{1,1} & \cdots & r^{1}_{4,0}  & r^{1}_{4,1} \\
  r^{2}_{1,0}  & r^{2}_{1,1} & \cdots & r^{2}_{4,0}  & r^{2}_{4,1} \\
  r^{3}_{1,0}  & r^{3}_{1,1} & \cdots & r^{3}_{4,0}  & r^{3}_{4,1} \\
  r^{4}_{1,0}  & r^{4}_{1,1} & \cdots & r^{4}_{4,0}  & r^{4}_{4,1}
 \end{pmatrix}.
\end{equation}
Each row of $D^{\rm r}$ corresponds to a measurement, ordered from top to bottom as $M_{1}$, $M_2$, $M_3$, and $M_4$. Similary, the columns are labelled from left to right as $P_{1,0}$, $P_{1,1}$, $P_{2,0}$, $P_{2,1}$, $P_{3,0}$, $P_{3,1}$,$P_{4,0}$, and $P_{4,1}$.

In order to test the assumption that three independent binary-outcome measurements are tomographically complete for our system, we fit the raw data to a matrix, $D^{\rm p}$, of primary data defined in Eq.~\eqref{prim_matrix}.
$D^{\rm p}$ contains the outcome probabilities of four measurements on eight states in the GPT-of-best-fit to the raw data. We fit to a GPT in which three 2-outcome measurements are tomographically complete, which we characterize with the following result.

\begin{proposition}\label{rawprop}
A matrix $D^p$ can arise from a GPT in which three two-outcome measurements are tomographically complete if and (with a measure zero set of exceptions) only if $a p^1_{t,b}+b p^2_{t,b}+c p^3_{t,b}+ d p^4_{t,b} -1 = 0$ for some real constants $\{a,b,c,d\}$.
\end{proposition}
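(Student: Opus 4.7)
The plan is to exploit the fact that tomographic completeness of three two-outcome measurements restricts the GPT's effect space to at most four dimensions. In any GPT, each two-outcome measurement is determined by its outcome-0 effect, which is a linear functional on the state space; if $M_1,M_2,M_3$ are tomographically complete then the map $s\mapsto(e_1(s),e_2(s),e_3(s))$ is injective on the convex set of states. Hence the effects $\{e_1,e_2,e_3,u\}$, with $u$ the unit effect, span the space of linear functionals on the affine hull of the states, and any further effect $e_4$ admits an expansion $e_4=\alpha e_1+\beta e_2+\gamma e_3+\delta u$ for some real $\alpha,\beta,\gamma,\delta$.

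For the ``only if'' direction, evaluating this decomposition on each preparation gives $p^4_{t,b}=\alpha p^1_{t,b}+\beta p^2_{t,b}+\gamma p^3_{t,b}+\delta$ for all $(t,b)$. Provided $\delta\neq 0$, dividing by $-\delta$ and rearranging produces the stated relation with $a=-\alpha/\delta$, $b=-\beta/\delta$, $c=-\gamma/\delta$, $d=1/\delta$. For the ``if'' direction, starting from the relation with $d\neq 0$, I build an explicit GPT: let the state of $P_{t,b}$ be the vector $\vec{q}_{t,b}=(p^1_{t,b},p^2_{t,b},p^3_{t,b})\in\mathbb{R}^3$, take the outcome-0 effects of $M_1,M_2,M_3$ to be the three coordinate projections (so that those three measurements are tomographically complete by construction), and define $e_4(\vec{q})=(1-aq_1-bq_2-cq_3)/d$. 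By construction $e_{t'}(\vec{q}_{t,b})=p^{t'}_{t,b}$, so the GPT reproduces $D^{\rm p}$.

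The ``measure zero set of exceptions'' is precisely the locus $\delta=0$, where $p^4=\alpha p^1+\beta p^2+\gamma p^3$ is purely linear (no constant term) and therefore cannot be rewritten with a nonzero constant on the right-hand side. This is a single linear condition on the four parameters $(\alpha,\beta,\gamma,\delta)$ labelling the fourth effect, so it cuts out a codimension-one, hence measure-zero, subset of the matrices $D^{\rm p}$ reproducible by such a GPT. The main delicate step, I expect, is making this measure-zero claim precise within the semialgebraic set of admissible $D^{\rm p}$ under any natural parametrization; a secondary subtlety in the ``if'' direction is to verify that the constructed states and effects can be completed to a bona fide GPT with all effects valued in $[0,1]$ on all states, which is easily arranged by restricting to a suitable convex hull.
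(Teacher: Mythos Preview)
Your argument is essentially the paper's: both use that tomographic completeness of three binary measurements bounds the effect space to four dimensions, extract a linear relation among $e_1,e_2,e_3,e_4$ and the unit effect $u$, and evaluate on states to obtain the affine constraint on the columns of $D^{\rm p}$. Your ``if'' direction is identical to the paper's up to the cosmetic choice of inhomogeneous coordinates $\vec q\in\mathbb{R}^3$ versus the paper's homogeneous $(1,p^1,p^2,p^3)\in\mathbb{R}^4$.

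One point deserves tightening. You expand $e_4$ in $\{e_1,e_2,e_3,u\}$, which presupposes that $M_1,M_2,M_3$ are themselves the tomographically complete triple; the proposition only asserts that \emph{some} three two-outcome measurements in the GPT are tomographically complete. The paper handles this by introducing abstract fiducial measurements $M_A,M_B,M_C$ to fix the state representation as four-dimensional, and then invokes the linear dependence of the five vectors $\mathbf{r}_1,\mathbf{r}_2,\mathbf{r}_3,\mathbf{r}_4,\mathbf{r}_{\idn}$ in $\mathbb{R}^4$, isolating the coefficient of $\mathbf{r}_{\idn}$ rather than that of $\mathbf{r}_4$. Accordingly, the paper's measure-zero exception is $\mathbf{r}_{\idn}\notin\mathrm{span}\{\mathbf{r}_1,\ldots,\mathbf{r}_4\}$, whereas your ``precisely the locus $\delta=0$'' already assumes $\{e_1,e_2,e_3,u\}$ span. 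Your argument is easily repaired along the paper's lines, but as written the exceptional set is slightly mischaracterised.
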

\begin{proof}
We begin with the ``only if'' part. Following \cite{hardy01,barrett07}, if a set of two-outcome measurements $M_A, M_B, M_C$ (called {\em fiducial} measurements) are tomographically complete for a system, then the state of the system given a preparation $P$ can be specified by the vector
\begin{equation}
\mathbf{p} = \begin{pmatrix}1 \\ p(0|M_A, P)\\p(0|M_B,P)\\p(0|M_C,P)\end{pmatrix}
\label{pvector}
\end{equation}
(where the first entry indicates that the state is normalized). In \cite{hardy01,barrett07} it is shown that convexity then requires that the probability of outcome `0' for any measurement $M$ is given by $\mathbf{r} \cdot \mathbf{p}$ for some vector $\mathbf{r}$. Let $\mathbf{r}_1, \mathbf{r}_2, \mathbf{r}_3, \mathbf{r}_4$ correspond to outcome `0' of the measurements $M_1, M_2, M_3, M_4$, and note that the measurement event that {\em always} occurs, regardless of the preparation (e.g. the event of obtaining either outcome `0' or `1' in any binary-outcome measurement), must be represented by $\mathbf{r}_{\idn} = (1,0,0,0)$.
\color{black}
Since the ${\mathbf{r}_1, \mathbf{r}_2, \mathbf{r}_3, \mathbf{r}_4, \mathbf{r}_{\idn}}$ are a set of five four-dimensional vectors, they must be linearly dependent:
\begin{equation}
  a' \mathbf{r}_1 + b' \mathbf{r}_2 + c' \mathbf{r}_3 + d' \mathbf{r}_4 + e' \mathbf{r}_{\idn} = 0 \label{lindep1}
\end{equation}
 with $(a',b',c',d', e') \neq (0,0,0,0,0)$.
The set of $\mathbf{r}$ for which $e'$ \emph{must} be zero are those where $\mathbf{r}_{\idn}$ is not in the span of $\mathbf{r}_1, \mathbf{r}_2, \mathbf{r}_3, \mathbf{r}_4$, which is a set of measure zero. Hence we can generically ensure $e' \neq 0$ and divide Eq.~\eqref{lindep1} through by $-e'$ to obtain
\begin{equation}
  a \mathbf{r}_1 + b \mathbf{r}_2 + c \mathbf{r}_3 + d \mathbf{r}_4 - \mathbf{r}_{\idn} = 0 \label{lindep2}
\end{equation}
where $a = -a'/e'$, $b = -b'/e'$ and so on.

Finally, letting $\mathbf{p}_{t,b}$ denote the column vector of the form of Eq.~\eqref{pvector} that is associated to the preparation $P_{t,b}$, and noting that by definition
\beq
p^{t'}_{t,b} =  \mathbf{r}_{t'} \cdot \mathbf{p}_{t,b},
\eeq
we see that by taking the dot product of Eq.~\ref{lindep2} with each $\mathbf{p}_{t,b}$, we obtain the desired constraint on $D_{\rm p}$.

For the ``if'' part, we assume the constraint and demonstrate that there exists a triple of binary-outcome measurements, $M_A$, $M_B$, and $M_C$, that are tomographically complete for the GPT.   To establish this, it is sufficient to take  the fiducial set, $M_A$, $M_B$ and $M_C$, to be $M_1$, $M_2$, and $M_3$, so that preparation $P_{t,b}$ corresponds to the vector
\begin{equation}
  \mathbf{p}_{t,b} = \begin{pmatrix} 1 \\ p^1_{t,b} \\ p^2_{t,b} \\ p^3_{t,b}\end{pmatrix}.
\end{equation}
In this case, we can recover $D^{\rm p}$ if $M_1$, $M_2$, and $M_3$ are represented by $\mathbf{r}_1 = (0,1,0,0)$, $\mathbf{r}_2 = (0,0,1,0)$ and $\mathbf{r}_3 = (0,0,0,1)$, whilst the assumed constraint implies that $\mathbf{r}_4 = -(-1, a, b, c)/d$.
\end{proof}

Geometrically, the proposition dictates that the eight columns of $D^{\rm p}$ lie on the 3-dimensional hyperplane defined by the constants $\{a,b,c,d\}$.

To find the GPT-of-best-fit we fit a 3-d hyperplane to the eight 4-dimensional points that make up the columns of $D^{\rm r}$. We then map each column of $D^{\rm r}$ to its closest point on the hyperplane, and these eight points will make up the columns of $D^{\rm p}$. We use a weighted total least-squares procedure~\cite{krystek07,numrec} to perform this fit. Each element of $D^{\rm r}$ has an uncertainty, $\Delta r^{t'}_{t,b}$, which is estimated assuming the dominant source of error is the statistical error arising from Poissonian counting statistics. We define the \emph{weighted distance}, $\chi_{t,b}$, between the $(t,b)$ column of $D^{\rm r}$ and $D^{\rm p}$ as $\chi_{t,b}=\sqrt{\sum_{t'=1}^4 \left(r^{t'}_{t,b}-p^{t'}_{t,b}\right)^2/\left(\Delta r^{t'}_{t,b}\right)^2}$. Finding the best-fitting hyperplane can be summarized as the following minimization problem:
\begin{equation}
 \begin{aligned}\label{eq:chisq_comp}
 & \underset{\{p^i_{t,b},a,b,c,d\}}{\text{minimize}}
 & & \chi^2 = \sum_{t=1}^4\sum_{b=0}^1 \chi_{t,b}^2, \\
 & \text{subject to}
 & & a p^1_{t,b} + b p^2_{t,b} + c p^3_{t,b} + d p^4_{t,b} - 1 = 0 & \\
 & & & \forall \,t=1,\ldots,4, b=0,1.
 \end{aligned}
\end{equation}

The optimization problem as currently phrased is a problem in 36 variables---the 32 elements of $D^{\rm p}$ together with the hyperplane parameters $\{a,b,c,d\}$. We can simplify this by first solving the simpler problem of finding the weighted distance $\chi_{t,b}$ between the $(t,b)$ column of $D^{\rm r}$ and the hyperplane $\{a,b,c,d\}$. This can be phrased as the following 8-variable optimization problem:
\begin{equation}
 \begin{aligned}\label{eq:chi_j}
 & \underset{\{p^1_{t,b},p^2_{t,b},p^3_{t,b},p^4_{t,b}\}}{\text{minimize}} & & \chi_{t,b}^2 = \sum_{t'=1}^4 \frac{(r^{t'}_{t,b}-p^{t'}_{t,b})^2}{\left(\Delta r^{t'}_{t,b}\right)^2}, \\
 & \text{subject to} & & a p^1_{t,b} + b p^2_{t,b} + c p^3_{t,b} + d p^4_{t,b} - 1 = 0.
 \end{aligned}
\end{equation}
Using the method of Lagrange multipliers~\cite{krystek07}, we define the Lagrange function
$\Gamma = \chi_{t,b}^2 + \gamma(a p^1_{t,b} + b p^2_{t,b} + c p^3_{t,b} + d p^4_{t,b} - 1)$, where $\gamma$ denotes the Lagrange multiplier, then simultaneously solve
\begin{equation}
\frac{\partial\Gamma}{\partial\gamma}=0
\end{equation}
and
\begin{equation}
\frac{\partial\Gamma}{\partial p^{t'}_{t,b}}=0, \; t' = 1, \ldots, 4
\end{equation}
for the variables $\gamma$, $p^1_{t,b}$, $p^2_{t,b}$, $p^3_{t,b}$, and $p^4_{t,b}$. Substituting the solutions for $p^1_{t,b}$, $p^2_{t,b}$, $p^3_{t,b}$ and $p^4_{t,b}$ into Eq.~\eqref{eq:chi_j} we find
\begin{equation}
\chi_{t,b}^2 = \frac{(a r^1_{t,b} + b r^2_{t,b} + c r^3_{t,b} + d r^4_{t,b} - 1)^2}{\left(a\Delta r^1_{t,b}\right)^2 + \left(b\Delta r^2_{t,b}\right)^2 + \left(c\Delta r^3_{t,b}\right)^2 + \left(d\Delta r^4_{t,b}\right)^2},
\end{equation}
which now only contains the variables $a$, $b$, $c$, and $d$.

The hyperplane-finding problem can now be stated as the following four-variable optimization problem:
\begin{equation}
 \begin{aligned}
  \underset{\{a,b,c,d\}}{\text{minimize}}
 & & \chi^2 = \sum_{t=1}^4\sum_{b=0}^1 \chi_{t,b}^2
 \end{aligned}
\end{equation}
which we solve numerically.

The $\chi^2$ parameter returned by the fitting procedure is a measure of the goodness-of-fit of the hyperplane to the data. Since we are fitting eight datapoints to a hyperplane defined by four fitting parameters $\{a,b,c,d\}$, we expect the $\chi^2$ parameter to be drawn from a $\chi^2$ distribution with four degrees of freedom~\cite{numrec}, which has a mean of 4. As stated in the main text, we ran our experiment 100 times and obtained 100 independent $\chi^2$ parameters; these have a mean of $3.9\pm0.3$. In addition we performed a more stringent test of the fit of the model to the data by summing the counts from all 100 experimental runs before performing a single fit. This fit returns a $\chi^2$ of 4.33, which has a $p$-value of 36\%. The outcomes of these tests are consistent with our assumption that the raw data can be explained by a GPT in which three 2-outcome measurements are tomographically complete and which also exhibits Poissonian counting statistics. Had the fitting procedure returned $\chi^2$
values that were much higher, this would have indicated that the theoretical description of the preparation and measurement procedures required more than three degrees of freedom. On the other hand, had the fitting returned an average $\chi^2$ much lower than 4, this would have indicated that we had overestimated the amount of uncertainty in our data.

After finding the hyperplane-of-best-fit $\left\{a,b,c,d\right\}$, we find the points on the hyperplane that are closest to each column of $D^{\rm r}$. This is done by numerically solving for $p^1_{t,b}$, $p^2_{t,b}$, $p^3_{t,b}$, and $p^4_{t,b}$ in \eqref{eq:chi_j} for each value of $(t,b)$. The point on the hyperplane closest to the $(t,b)$ column of $D^{\rm r}$ becomes the $(t,b)$ column of $D^{\rm p}$. The matrix $D^{\rm p}$ is then used to find the secondary preparations and measurements.

\subsection{Why is fitting to a GPT necessary?}
It is clear that one needs to assume that the measurements one has performed form a tomographically complete set, otherwise statistical equivalence relative to those measurements does not imply statistical equivalence relative to all measurements. (Recall that the assumption of preparation noncontextuality only has nontrivial consequences when two preparations are statistically equivalent for all measurements.)

The minimal assumption for our experiment would therefore be that the four measurements we perform are tomographically complete. But our physical understanding of the experiment leads us to a stronger assumption, that three measurements are tomographically complete. Here we clarify why, given this latter assumption, it is necessary to carry out the step of fitting to an appropriate GPT.

It is again easier to begin by considering the case that our experiment is described by quantum theory. Let $(q^1_{t,b}, q^2_{t,b}, q^3_{t,b}, q^4_{t,b})$ denote the probability of obtaining outcome `0' in measurements $M_1$, $M_2$, $M_3$, $M_4$ on preparation $P_{t,b}$, according to quantum theory, namely $q^i_{t,b} = {\rm Tr}(E_i \rho_{t,b})$, where $E_i$ is the POVM element corresponding the the $0$ outcome of measurement $M_i$ and $\rho_{t,b}$ is the density operator for $P_{t,b}$.

Let us represent $\rho_{t,b} = \vec \sigma \cdot \vec u_{t,b}$ by a Bloch vector $\vec u_{t,b}$  and the elements $E_i = v_i^0 \idn + \vec \sigma \cdot \vec v_i$ by a ``Bloch four-vector'' $(v_i^0, \vec v_i)$. Then $q^i_{t,b} = v_i^0 + \vec u_{t,b} \cdot \vec v_i$. Since the $\vec v_i$ lie in a unit sphere, the $(q^1_{t,b}, q^2_{t,b}, q^3_{t,b}, q^4_{t,b})$ lie in the image of the sphere under the affine transformation $\vec u \mapsto (v_1^0, v_2^0, v_3^0, v_4^0) + (\vec v_1 \cdot \vec u, \vec v_2 \cdot \vec u, \vec v_3 \cdot \vec u, \vec v_4 \cdot \vec u)$, i.e. some ellipsoid, a three-dimensional shape in a four-dimensional space.

However, the relative frequencies we observe will fluctuate from $q^i_{t,b}$ in all four dimensions.
Fluctuations in the three dimensions spanned by the ``Bloch ellipsoid'' can be accomodated by using secondary preparations as described above. But fluctuations in the fourth direction are, according to quantum theory, always statistical and never systematic, and by the same token we cannot deliberately produce supplementary preparations that have any bias in this fourth direction.
Therefore, we need to deal with these fluctuations in a different way. If one was assuming quantum theory, one would simply fit relative frequencies to the closest points $q^{t'}_{t,b}$ in the Bloch ellipsoid, just as one usually fits to the closest valid density operator.

Since we do not assume quantum theory, we do not assume that the states lie in an ellipsoid. However, we still make the assumption that three two-outcome measurements are tomographically complete. Hence, by Proposition \ref{rawprop}, the long-run probabilities lie in a three-dimensional subspace of a four-dimensional space, and so there are no supplementary preparations that can deal with fluctuations of relative frequencies in the fourth dimension. Instead of fitting to the ``Bloch ellipsoid'', we fit to a suitable GPT.

\subsection{Analysis of statistical errors}

Because the relative frequencies derived from the raw data constitute a finite sample of the true probabilities (i.e. the long-run relative frequencies), the GPT states and effects that yield the best fit to the raw data are {\em estimates} of the GPT states and effects that characterize the primary preparations and measurements.

It is these estimates that we input into the linear program that identifies the weights with which the primary procedures must be mixed to yield secondary procedures.  As such, our linear program outputs estimates of the true weights, and therefore when we use these weights to mix our estimates of the GPT states and effects that characterize the primary preparations and measurements, we obtain estimates of the GPT states and effects that characterize the secondary preparations and measurements.  In turn,  these estimates are input into the expression for $A$ and yield an estimate of the value of $A$ for the secondary preparations and measurements.

To determine the statistical error on our estimate of $A$, we must quantify the statistical error on our estimates of the GPT states for the primary preparations and on our estimates of the GPT effects for the primary measurements.  We do so by taking our experimental data in 100 distinct runs, each of which yields one such estimate.  For each of these, we follow the algorithm for computing the value of $A$.  In this way, we obtain 100 samples of the value of $A$ for the secondary procedures, and these are used to determine the statistical error on our estimate for $A$.

Note that a different approach would be to presume some statistical noise model for our experiment, then input the observed relative frequencies (averaged over the entire experiment) into a program that adds noise using standard Monte Carlo techniques.  Though one could generate a greater number of samples of $A$ in this way, such an approach would be worse than the one we have adopted because the error analysis would be only as reliable as one's assumptions regarding the nature of the noise.

Given that the quantity $A$ we obtain is 2300$\sigma$ above the noncontextual bound, we can conclude that there is a very low likelihood that a noncontextual model would provide a better fit to the true probabilities than the GPT that best fit our finite sample would. This is the sense in which our experiment rules out a noncontextual model with high confidence.

It should be noted that this sort of analysis of statistical errors is no different from that used for experimental tests of Bell inequalities.  The Bell quantity (the expression that is bounded in a Bell inequality) is defined in terms of the true probabilities.  Any  Bell experiment, however, only gathers a finite sample of these true probabilities.  From this sample, one estimates the true probabilities and in turn the value of the Bell quantity.   We treat the quantity $A$ appearing in our noncontextuality inequality in a manner precisely analogous to the Bell quantity. The definition of $A$ in terms of the true probabilities is admittedly more complicated than for a Bell quantity:  we define secondary procedures based on an optimization problem that takes as input the true probabilities for the primary procedures, and use the true probabilites for the secondary procedures to define $A$. But this complication does not change the fact that $A$ is ultimately just a function of the true probabilities for the primary preparations and measurements, albeit a function that incorporates a particular linear optimization problem in its definition.\color{black}

\section{Experimental methods}
A 20-mW diode laser with a wavelength of $404.7$~nm produces photon pairs, one horizontally polarized the other vertically polarized, via spontaneous parametric down-conversion in a $20$-mm type-II PPKTP crystal. The downconversion crystal is inside a Sagnac loop and the pump laser is polarized vertically to ensure it only travels counter-clockwise around the loop. Photon pairs are separated at a polarizing beamsplitter and coupled into two single-mode fibres (SMFs). Vertically-polarized photons are detected immediately at detector $D_h$, heralding the presence of the horizontally-polarized signal photons which emerge from SMF and pass through a state-preparation stage before they are measured. Herald photons were detected at a rate of $400\,\mathrm{kHz}$. The single photon detection rate at detectors $D_r$ and $D_t$ depends on the measurement settings. In the transmissive and reflective ports of the Glan-Taylor PBS (GT-PBS) used in the measurement, photons were detected at maximum rates of $330\,\mathrm{kHz}$ and $250\,\mathrm{kHz}$, respectively. Coincident detection events between herald photons and the transmissive and reflective ports of the measurement PBS were up to $22\,\mathrm{kHz}$ and $16\,\mathrm{kHz}$, respectively.

Signal photons emerge from the fibre and pass through a Glan-Taylor PBS which transmits vertically polarised light. Polarization controllers in the fibre maximize the number of photons which pass through the beamsplitter. A quarter- and half-waveplate set the polarization of the signal photons to one of eight states.

An SMF acts as a spatial mode filter. This filter ensures that information about the angles of the state-preparation waveplates cannot be encoded in the spatial mode of the photons, and that our measurement procedures do not have a response that depends on the spatial mode,
but only on polarization as intended.
The SMF induces a fixed polarization rotation, so a set of three compensation waveplates are included after the SMF to undo this rotation.
It follows that the preparation-measurement pairs implemented in our experiment are in fact a rotated version of the ideal preparation and a similarly-rotated version of the ideal measurement.  Such a fixed rotation, however, does not impact any of our analysis.

Measurements are performed in four bases, set by a half- and quarter-waveplate. A second Glan-Taylor PBS splits the light, and both output ports are detected. Due to differences in the coupling and detection efficiencies in each path after the beamsplitter, each measurement consists of two parts. First, the waveplates are aligned such that states corresponding to outcome `0' are transmitted by the PBS, and the number of heralded photons detected in a two-second window is recorded for each port. Second, the waveplate angles are changed in such a way as to invert the outcomes, so the detector in the reflected port corresponds to outcome `0' and heralded photons are detected for another two seconds. The counts are added together and the probability for outcome `0' is calculated by dividing the number of detections corresponding to outcome `0' by the total number of detection events in the four-second window.

\end{document}